\newcommand{\algorithmicbreak}{\textbf{break}}
\newcommand{\Break}{\algorithmicbreak}
\theoremstyle{plain}
\newtheorem{theorem}{Theorem}[section]
\newtheorem{lemma}[theorem]{Lemma}
\newtheorem{corollary}[theorem]{Corollary}
\theoremstyle{definition}
\newtheorem{definition}[theorem]{Definition}
\theoremstyle{remark}
\newenvironment{proofof}[1]{\begin{trivlist} \item {\bf Proof
#1:~~}}
  {\qed\end{trivlist}}
\renewenvironment{proofof}[1]{\par\medskip\noindent{\bf Proof of #1: \ }}{\hfill$\Box$\par\medskip}
\newcommand{\REAL}{\ensuremath{\mathbb{R}}}
\newcommand{\NATURAL}{\ensuremath{\mathbb{N}}}
\newcommand{\eps}{\varepsilon}
\newcommand{\blue}[1]{#1}
\newcommand{\Ex}[1]{\ensuremath{\mathbf{E}[#1]}}
\DeclareMathOperator{\argmax}{arg\,max}
\newcommand{\subsetselect}{{\textsc{SubsetSelection}}}
\newcommand{\uniformsubset}{{\textsc{UniformSubset}}}
\newcommand{\localsearchsubset}{{\textsc{LocalSearchSubset}}}
\newcommand{\dynamicthreshold}{{\textsc{DynamicThreshold }}} 
\newcommand{\dynamicmonotone}{{\textsc{DynamicThresholding}}}
\newcommand{\InsertF}{{\textsc{Insert}}}
\newcommand{\DeleteF}{{\textsc{Delete}}}
\newcommand{\UpdateF}{{\textsc{Update}}}
\newcommand{\extractF}{{\textsc{Extract}}}
\newcommand{\constLevel}{\textsc{ConstructLevel }}
\newcommand{\insertv}{{\textsc{Insert}}}
\newcommand{\deletev}{{\textsc{Delete}}}
\newcommand{\mO}{O}
\newcommand{\mI}{{\mathcal{I}}}
\newcommand{\etal}{\emph{et al.}}
\title{Dynamic Non-monotone Submodular Maximization}
\author{
Kiarash Banihashem\thanks{equal contribution} \textsuperscript{ \thanks{Department of Computer Science, University of Maryland, College Park, MD, USA.}}
\\ \texttt{ kiarash@umd.edu} \\ University of Maryland
\And
Leyla Biabani\footnotemark[1] \textsuperscript{\thanks{Department of Mathematics and Computer Science, Eindhoven University of Technology, the Netherlands.} }
\\ \texttt{ l.biabani@tue.nl} \\ TU Eindhoven 
\And
Samira Goudarzi\footnotemark[1]\textsuperscript{\ \ \footnotemark[2]} \\ \texttt{ samirag@umd.edu} \\ University of Maryland
\And
MohammadTaghi Hajiaghayi\footnotemark[1]\textsuperscript{\ \ \footnotemark[2]}\\  \texttt{ hajiagha@cs.umd.edu} \\ University of Maryland
\And
Peyman Jabbarzade\footnotemark[1]\textsuperscript{\ \ \footnotemark[2]} \\ \texttt{ peymanj@umd.edu} \\ University of Maryland
\And
Morteza Monemizadeh\footnotemark[1]\textsuperscript{\ \ \footnotemark[3]} \\ \texttt{ m.monemizadeh@tue.nl} \\ TU Eindhoven 
}
\begin{document}

\maketitle

\begin{abstract}
Maximizing submodular functions has been increasingly used in many applications of machine learning, such as data summarization, recommendation systems, and feature selection. Moreover, there has been a growing interest in both submodular maximization and dynamic algorithms. 
In 2020, Monemizadeh~\cite{DBLP:conf/nips/Monemizadeh20} and 
Lattanzi, Mitrovic, Norouzi{-}Fard, Tarnawski, and Zadimoghaddam~\cite{DBLP:conf/nips/LattanziMNTZ20} initiated developing dynamic algorithms for the monotone submodular maximization problem under the cardinality constraint $k$. 
In 2022, Chen and Peng ~\cite{DBLP:journals/corr/abs-2111-03198} studied the complexity of this problem and raised an important open question: "\emph{Can we extend [fully dynamic] results (algorithm or hardness) to non-monotone submodular maximization?}". 
We affirmatively answer their question by demonstrating a reduction from maximizing a non-monotone submodular function under the cardinality constraint $k$ to maximizing a monotone submodular function under the same constraint. 
Through this reduction, we obtain the first dynamic algorithms solving the non-monotone submodular maximization problem under the cardinality constraint $k$. 
We've derived two algorithms, both maintaining an $(8+\epsilon)$-approximate of the solution. 
The first algorithm requires $\mathcal{O}(\epsilon^{-3}k^3\log^3(n)\log(k))$ oracle queries per update, while the second one requires $\mathcal{O}(\epsilon^{-1}k^2\log^3(k))$. 
Furthermore, we showcase the benefits of our dynamic algorithm for video summarization and max-cut problems on several real-world data sets.
\end{abstract}

\section{Introduction}
Submodular functions are powerful tools for solving
real-world problems as they provide a theoretical framework for modeling the famous 
``\emph{diminishing returns}''~\cite{DBLP:journals/mp/Fujishige84a} phenomenon arising in a variety of practical settings.
Many theoretical problems such as those involving graph cuts, entropy-based
clustering, coverage functions,
 and mutual information can be cast in the submodular maximization framework.
As a result, submodular functions have been increasingly used
in many applications of machine learning such as data summarization~\cite{NIPS2014_a8e864d0,DBLP:conf/cikm/SiposSSJ12,DBLP:conf/iccv/SimonSS07}, feature selection~\cite{DBLP:conf/stoc/DasK08,DBLP:journals/jmlr/DasK18,DBLP:conf/icml/DasK11,DBLP:conf/aistats/KhannaEDNG17}, 
and recommendation systems~\cite{DBLP:conf/kdd/El-AriniG11}. 
These applications include both the monotone and non-monotone versions of the maximization of submodular functions.

\paragraph{Applications of non-monotone submodular maximization.}
The general problem of non-monotone submodular maximization
has been studied extensively in~\cite{DBLP:journals/siamcomp/FeigeMV11, 10.5555/2634074.2634180, DBLP:journals/siamcomp/BuchbinderFNS15,pmlr-v48-mirzasoleiman16, 10.5555/3327144.3327162, DBLP:conf/icml/Norouzi-FardTMZ18}.
This problem has numerous  applications in
video summarization, movie recommendation~\cite{pmlr-v48-mirzasoleiman16}, and revenue maximization in viral marketing~\cite{hartline2008optimal}\footnote{The
problem of selecting a subset of people in a social network to maximize their influence in a viral marketing campaign 
can be modeled as a constrained submodular maximization problem. When we introduce a cost, 
then the influence minus the cost is modeled as non-monotone submodular maximization problems~\cite{DBLP:conf/approx/BateniHZ10,DBLP:journals/talg/BateniHZ13,DBLP:conf/wine/GuptaRST10}.
}. An important application of
this problem
appears in maximizing the difference between a monotone submodular function and a linear function 
that penalizes the
addition of more elements to the set (e.g., the coverage
and diversity trade-off). An illustrative example of this application is the maximum facility location  
in which we want to open a subset of facilities and maximize the total profit from served clients plus the
cost of facilities we did not open~\cite{DBLP:conf/iccv/DueckF07}.
Another important application occurs when expressing learning problems such as feature selection
using weakly submodular functions~\cite{DBLP:conf/stoc/DasK08, DBLP:conf/aistats/KhannaEDNG17, DBLP:journals/corr/ElenbergKDN16, 10.5555/3454287.3454743}. 

\paragraph{Our contribution.}
In this paper, we consider the non-monotone submodular maximization problem 
under cardinality constraint $k$ 
in the \emph{fully dynamic setting}.
In this model, we have a \emph{universal ground set} $V$.
At any time $t$, ground set $V_t \subseteq V$ is the set of elements that are inserted but not deleted after their last insertion till time $t$.
More formally, we assume that there is a sequence of ``updates'' such that each update either inserts 
an element to $V_{t-1}$ or deletes an element from $V_{t-1}$ to form $V_t$. 

We assume that there is a (non-monotone) submodular function $f$ 
that is defined over the universal ground set $V$.
Our goal is to maintain, at each point in time, a set of size at most $k$ whose submodular value 
is maximum among any subset of $V_t$ of size at most $k$.

Since calculating such a set is known to be NP-hard ~\cite{DBLP:journals/siamcomp/FeigeMV11} 
even in the offline setting (where you get all the items at the same time), 
we focus on providing algorithms with provable approximation guarantees, 
while maintaining fast update time. This is challenging as elements may be inserted or deleted, 
possibly in an adversarial order. While several dynamic algorithms exist 
for monotone submodular maximization, 
non-monotone submodular maximization is a considerably more challenging problem as adding elements to a set may decrease its value.

In STOC 2022, Chen and Peng~\cite{DBLP:journals/corr/abs-2111-03198} raised the following open question:

\textbf{Open problem:}  
 ``Can we extend [fully dynamic] results (algorithm or hardness) to non-monotone submodular maximization?''

 In this paper, we answer their question affirmatively by providing
 the first dynamic algorithms for non-monotone submodular maximization.

To emphasize the significance of our result, it should be considered that although monotone submodular maximization under cardinality constraint has a tight $\frac{e}{e-1}$ approximation algorithm in the offline mode and nearly tight ($2 + e$) approximation algorithms for both streaming and dynamic settings, there is a hardness result for the non-monotone version stating that it is impossible to obtain a $2.04$ (i.e., $0.491$) approximation algorithm for this problem even in the offline setting\cite{10.5555/2133036.2133119}, and to the best of our knowledge, the current state of the art algorithms for this problem have $2.6$ and $3.6$ (i.e., $0.385$ and $0.2779$) approximation guarantees for offline \cite{DBLP:journals/mor/BuchbinderF19} and streaming settings \cite{DBLP:conf/icalp/AlalufEFNS20}, respectively.
 
We obtain our result, by proposing a general reduction from the problem of dynamically maintaining 
a non-monotone submodular function under cardinality constraint $k$ to developing a dynamic thresholding algorithm for maximizing monotone submodular functions under the same constraint.
We first define $\tau$-thresholding dynamic algorithms that we use in our reduction.

\begin{definition}[$\tau$-Thresholding Dynamic Algorithm]
\label{def:tau:threshold}
Let $\tau > 0$ be a parameter. 
We say a dynamic algorithm is $\tau$-thresholding if at any time $t$ of sequence $\Xi$, it 
reports a set $S_t \subseteq V_t$ of size at most $k$ such that 
\begin{itemize}
    \item \textbf{Property 1:} either\textbf{ a)}  $S_t$ has $k$ elements and $f(S_t) \geq k\tau$, or \textbf{b)} $S_t$ has less than $k$ elements and for any $v\in V_t\setminus S_t$, the marginal gain $\Delta(v|S_t) < \tau$.
    \item \textbf{Property 2:}  The number of elements changed in any update, i.e, 
    $|S_{t+1} \backslash S_{t}| + |S_{t} \backslash S_{t+1}|$,
    is not more than the number of queries made by the algorithm during the update.
\end{itemize}

\end{definition}

 
In the above definition, the first property reflects the main intuition of threshold-based algorithms, while the last property is a technical condition required in our analysis. 
It's worth noting that the thresholding technique has been used widely for optimizing submodular functions \cite{DBLP:conf/nips/Monemizadeh20,DBLP:conf/soda/LiuV19,DBLP:conf/icml/FahrbachMZ19,DBLP:conf/nips/LattanziMNTZ20,chen2022practical}.
We next state our main result, which is a general reduction.


\begin{theorem}[\textbf{Reduction Metatheorem}]
\label{thm:meta}
Suppose that $f:2^V \rightarrow \REAL_{\geq 0}$  is a (possibly non-monotone) submodular function 
defined on subsets of a ground set $V$ and let $k \in \NATURAL$ be a parameter. 

Assume that for any given value of $\tau$, there exists a $\tau$-thresholding dynamic algorithm with an expected (amortized) $ \mO (g(n,k))$ oracle queries per update.
Then, there exist the following dynamic algorithms: 
\begin{itemize}
  \item A dynamic algorithm with an approximation guarantee of $(8+\eps)$ 
 using an expected (amortized) $\mO(k+\min(k, g(n,k)) \cdot g(n,k) {\cdot \eps^{-1}\log(k)} )$ oracle queries per update.
    \item A dynamic algorithm maintaining a $(10+\eps)$-approximate solution of the optimal value of $f$  using an expected (amortized)
 $\mO(\min(k, g(n,k)) \cdot g(n,k) {\cdot \eps^{-1}\log(k)} )$ oracle calls per update.
\end{itemize}
\end{theorem}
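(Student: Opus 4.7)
\emph{Plan.} The plan is to dynamically maintain $\Theta(\epsilon^{-1}\log k)$ parallel copies of the given $\tau$-thresholding algorithm, one per value of $\tau$ in a geometric grid covering the range in which the ``right'' threshold $\tau^\ast \approx \text{OPT}/(ck)$ must lie. The range is determined by the current singleton maximum $\max_{v \in V_t} f(\{v\})$, which I would track with a heap; maintaining this maximum and reporting $\{v^\ast\}$ as an explicit extra candidate costs $O(k)$ per update but improves the approximation guarantee from $(10+\epsilon)$ to $(8+\epsilon)$, exactly accounting for the difference between the two stated bounds. To handle non-monotonicity, for each guess of $\tau$ I would chain several thresholding instances in a peeling fashion: instance $j$ would run on the residual ground set $V_t \setminus \bigcup_{i<j} S_i^\tau$ and output $S_j^\tau$. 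The reported solution at any time would be the set of largest $f$-value among all $\{S_j^\tau\}$ (together with $\{v^\ast\}$ in the $(8+\epsilon)$ variant).

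\emph{Approximation analysis.} I would fix the grid value of $\tau$ closest to $\tau^\ast = \text{OPT}/(ck)$ and case-split on Property~1. If any layer outputs a $k$-element set, it already has value $\geq k\tau^\ast = \Theta(\text{OPT})$ and the approximation is immediate. Otherwise every layer is in the saturated regime, so by submodularity one has $f(S_j^\tau \cup T) \leq f(S_j^\tau) + k\tau^\ast$ for every $T$ of size at most $k$ contained in $V_t \setminus \bigcup_{i \leq j} S_i^\tau$. Combining this with the inequality $f(\text{OPT}) \leq f(\text{OPT} \cap S_1^\tau) + f(\text{OPT} \setminus S_1^\tau)$ (which follows from submodularity applied to the disjoint pair $(\text{OPT} \cap S_1^\tau,\,\text{OPT}\setminus S_1^\tau)$) and iterating the same decomposition for $f(\text{OPT}\setminus S_1^\tau)$ via the second layer $S_2^\tau$ would bound $f(\text{OPT})$ by a constant times $\max_j f(S_j^\tau)$ plus $O(k\tau^\ast)$; plugging $\tau^\ast = \text{OPT}/(ck)$ and absorbing the $1+O(\epsilon)$ slack from the geometric grid then yields the claimed constants.

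\emph{Query complexity.} Property~2 is the mechanism for bounding update cost. An update to $V_t$ would trigger $O(g(n,k))$ queries in the top-level instance and, by Property~2, at most $O(g(n,k))$ set-change operations in $S_1^\tau$; each such change is an update to the ground set of the next layer. Multiplying through the chain while capping its depth at $\min(k, g(n,k))$---by the cardinality budget $k$ on one hand and, on the other, by the point at which further layers cannot contribute any new set changes---gives $O(\min(k,g(n,k)) \cdot g(n,k))$ queries per $\tau$. Summing over the $O(\epsilon^{-1}\log k)$ thresholds and adding the $O(k)$ heap maintenance for the singleton maximum yields the stated bounds.

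\emph{Main obstacle.} The hardest part will be the non-monotone approximation analysis: unlike the offline Gupta--Roth--Schoenebeck--Talwar reduction, which uses an unconstrained-max oracle on the first greedy output, here I must work with the peeling outputs alone, and the accumulated $k\tau^\ast$ error terms across the chain have to be carefully balanced against the $f(S_j^\tau)$ gains to hit exactly the $(8+\epsilon)$ and $(10+\epsilon)$ constants. A secondary subtlety is the formal proof of the $\min(k,g(n,k))$ chain-depth bound, which will require an amortized argument showing that further layers eventually contribute no new set changes and hence no additional queries.
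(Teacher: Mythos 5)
Your proposal misses the crucial ingredient that makes the non-monotone analysis go through, and as a result the approximation argument would not close.

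The paper's reduction only uses \emph{two} thresholding layers, not a deep peeling chain: $\mathcal{I}_1$ produces $S_1$ on $V$, and $\mathcal{I}_2$ produces $S_2$ on $V\setminus S_1$. The essential extra step --- which your plan omits --- is to run an \emph{unconstrained} non-monotone submodular maximization subroutine (\subsetselect{}, either the Buchbinder et al.\ double-greedy giving $\alpha = \tfrac{1}{2}$ or the Feige et al.\ random sampling giving $\alpha = \tfrac14$) on the set $S_1$ to extract $S_1' \subseteq S_1$ with $\Exx{f(S_1')} \ge \alpha \max_{C\subseteq S_1}f(C)$. This is unavoidable precisely because $f$ is non-monotone: in the case where $f(S_1\cap S^*)$ is large, $f(S_1)$ itself can still be small (the thresholding algorithm greedily added elements that raised some $f(G_i)$ but may have destroyed the value of the good subset), and all of your residual layers $S_2^\tau, S_3^\tau, \dots$ are run on ground sets that explicitly \emph{exclude} $S_1\cap S^*$, so none of them can recover it. Without the $S_1'$ candidate, the claim ``$f(\text{OPT}) \le f(\text{OPT}\cap S_1) + f(\text{OPT}\setminus S_1)$ and iterate'' only upper-bounds $f(\text{OPT})$ by quantities the algorithm never computes; you need a handle on $f(\text{OPT}\cap S_1)$ in terms of a reported candidate, and \subsetselect{} is exactly that handle. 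Relatedly, your attribution of the $(8+\epsilon)$ vs.\ $(10+\epsilon)$ gap to ``adding $\{v^\ast\}$ as a candidate with a heap'' is incorrect: both constants are $6 + \tfrac{1}{\alpha}$ up to the $\epsilon$ slack, with $\alpha=\tfrac12$ (local search) giving $8$ and $\alpha=\tfrac14$ (random sampling) giving $10$; the extra additive $O(k)$ in the $(8+\epsilon)$ bound is the cost of the single pass of local search over $S_1$, not heap maintenance.

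The query-complexity mechanism is also different from what you describe. The $\min(k,g(n,k))$ factor is \emph{not} a chain-depth cap. It bounds the number of \emph{insertions/deletions forwarded to $\mathcal{I}_2$} after a single update: the symmetric difference $|S_1 \triangle Z|$ is at most $2k$ trivially, and at most $O(g(n,k))$ by Property~2. Each forwarded operation costs $g(n,k)$ amortized queries in $\mathcal{I}_2$, giving $O(\min(k,g(n,k))\cdot g(n,k))$ per update per run. If you literally ``multiply through a chain'' of depth $d$ with $O(g(n,k))$ fan-out at each level, you get $g(n,k)^d$ rather than $g(n,k)^2$; the paper avoids this by stopping at two layers, which is possible only because \subsetselect{} handles the remaining case. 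So the fix is not a subtlety in the amortization --- your plan needs the \subsetselect{} step to repair the approximation argument, which simultaneously eliminates the need for a deep chain and keeps the query complexity quadratic in $g(n,k)$.
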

In ~\cite{DBLP:conf/nips/Monemizadeh20}, Monemizadeh developed a dynamic algorithm for monotone submodular maximization under cardinality constraint $k$, which requires an amortized $\mO(\eps^{-2}k^2\log^3(n))$ number of oracle queries per update.
Interestingly, in the appendix, we show that this algorithm is indeed $\tau$-thresholding (for any given $\tau$).  
Now, if we use this $\tau$-thresholding dynamic algorithm 
inside our reduction Metatheorem ~\ref{thm:meta}, we obtain a dynamic algorithm that maintains a $(8+\eps)$-approximate solution using an expected amortized $\mO(\eps^{-3}k^3\log^3(n)\log(k))$ 
oracle queries per update.

The recent paper \cite{bani2023dynamicmat} of Banihashem, Biabani, Goudarzi, Hajiaghayi, Jabbarzade, and Monemizadeh develops a new dynamic algorithm for monotone submodular maximization under cardinality constraint $k$, which uses an expected $\mO(\eps^{-1}k\log^2(k))$ number of oracle queries per update. A similar proof shows that this new algorithm is $\tau$-thresholding as well. We have provided its pseudocode and a detailed explanation on why this algorithm is indeed $\tau$-thresholding in the appendix. By exploiting this algorithm in our Reduction Metatheorem ~\ref{thm:meta}, we can reduce the number of oracle queries mentioned to an expected number of $\mO(\eps^{-2}k^2\log^3(k))$ per update.

The second result in Theorem~\ref{thm:meta} is also of interest as it can 
be used to devise a dynamic algorithm for non-monotone submodular maximization with polylogarithmic query complexity if one can provide a $\tau$-thresholding dynamic algorithm 
for maximizing monotone submodular functions (under the cardinality constraint $k$) with polylogarithmic query complexity.


\subsection{Preliminaries}
\paragraph{Submodular maximization.}
Let $V$ be a ground set of elements. 
We say a function $f:2^V\to \REAL_{\geq 0}$  is a \textit{submodular} function 
if for any $A, B \subseteq V$, $f(A)+f(B) \geq f(A\cup B) + f(A\cap B)$.
Equivalently, $f$ is a submodular function if for any subsets $A \subseteq B \subseteq V$ and
for any element $e \in V\setminus B$, it holds that 
$f(A\cup \{e\}) - f(A) \geq f(B \cup \{e\}) - f(B) \enspace .$
We define $\Delta(e|A) := f(A \cup \{e\}) - f(A)$ the \emph{marginal gain} of adding the element $e$ to set $A$.
Similarly, we define $\Delta(B|A) := f(A \cup B) - f(A)$ for any sets $A, B \subseteq V$.
Function $f$ is \textit{monotone} if $f(A) \leq f(B)$ holds for any $A \subseteq B \subseteq V$, and it is \textit{non-monotone} if it is not necessarily the case. 
In the submodular maximization problem under cardinality constraint $k$, we seek to compute 
a set $S^*$ such that $|S^*| \leq k$ and $f(S^*)=\max_{|S|\leq k}f(S)$, 
where $f$ is a submodular function and $k \in \mathbb{N}$ is a given parameter.

\paragraph{Query access model.}
Similar to recent dynamic works~\cite{DBLP:conf/nips/LattanziMNTZ20, DBLP:journals/corr/abs-2111-03198}, 
we assume the access to a submodular function $f$ is given by an \emph{oracle}. 
The oracle allows \emph{set queries} such that 
for every subset $A \subseteq V$, one can query the value $f(A)$. 
In this query access model, the marginal gain $\Delta_f(e | A) \doteq f(A \cup \{e\}) - f(A)$
for every subset  $A \subseteq V$ and an element $e \in V\backslash A$, 
 can be computed using two set queries. 
To do so, we first query $f(A \cup \{e\})$ and then $f(A)$.


\paragraph{Dynamic model.}
Let $\Xi$ be a sequence of inserts and deletes of an underlying universe $V$. 
We assume that $f: 2^V \rightarrow \mathbb{R}_{\ge 0}$ is a (possibly non-monotone) submodular function 
defined on subsets of the universe $V$. 
We define time $t$ to be the $t\textsuperscript{th}$ update (i.e., insertion or deletion) of sequence $\Xi$. 
We let $\Xi_t$ be the sub-sequence of updates from the beginning of sequence $\Xi$ till time $t$ 
and denote by $V_t \subseteq V$ the set of elements that are inserted but not deleted from the beginning of the sequence $\Xi$ 
till any time $t$. That is, $V_t$ is the current ground set of elements. 
We let  $OPT_t=\max_{S \subseteq V_t: |S|\le k}  f(S)$.

\paragraph{Query complexity.} 
The \emph{query complexity} of a dynamic $\alpha$-approximate algorithm is the number of oracle queries that the algorithm must make to compute a solution $S_t$ with respect to ground set $V_t$ 
whose submodular value is an $\alpha$-approximation of $OPT_t$. 
That is, $|S_t| \le k$ and $f(S_t) \ge \alpha\cdot OPT_t$. 
Observe that the dynamic algorithm remembers every query it has made so far. Thus results of queries
made in previous times may help find $S_t$ in current time $t$.


\paragraph{Oblivious adversarial model.}
The dynamic algorithms that we develop in this paper are in the \emph{oblivious adversarial model} as is common for analysis
of randomized data structures such as universal hashing \cite{DBLP:conf/stoc/CarterW77}. 
The model allows the adversary, who is aware of the submodular function $f$ and the algorithm that is going to be used, to determine all the arrivals and departures of the elements in the ground set $V$.
However, the adversary is unaware of the random bits used in the algorithm and so cannot choose updates adaptively in response to the randomly guided choices of the algorithm. Equivalently, we can suppose that the adversary prepares the full input (insertions and deletions) before the algorithm runs.


\subsection{Related Work}
\label{sec.related}
\paragraph{Offline algorithms.} 
The offline version of non-monotone submodular maximization  
was first studied by Feige, Mirrokni, and Vondr{\'{a}}k in~\cite{DBLP:journals/siamcomp/FeigeMV11}. 
They studied \emph{unconstrained non-monotone submodular maximization} and developed constant-factor approximation algorithms for this problem. 
In the offline query access model, they showed that  a subset
$S$ chosen uniformly at random has a submodular value which is a $4$-approximation of 
the optimal value for this problem. 
In addition,  they also described two local search algorithms. 
The first uses $f$ as the objective function, and provides $3$-approximation and 
the second uses a noisy version of $f$ as the objective function and achieves an improved
approximation guarantee $2.5$ for maximizing unconstrained non-monotone non-negative submodular functions. 
Interestingly, they showed $(2-\eps)$-approximation for symmetric submodular functions would require 
an exponential number of queries for any fixed $\eps > 0$. 

Oveis Gharan and Vondr{\'{a}}k~\cite{DBLP:conf/soda/GharanV11}
showed that an extension of the $2.5$-approximation algorithm can be seen as \emph{simulated annealing} method which provides an improved approximation of roughly $2.4$. 
Later, Buchbinder, Feldman, Naor, and Schwartz~\cite{DBLP:journals/siamcomp/BuchbinderFNS15} 
at FOCS'12, presented a  randomized linear time algorithm achieving a tight approximation guarantee of $2$ 
that matches the known hardness result of~\cite{DBLP:journals/siamcomp/FeigeMV11}. 
Bateni, Hajiaghayi, and Zadimoghaddam~\cite{DBLP:conf/approx/BateniHZ10,DBLP:journals/talg/BateniHZ13} and 
Gupta, Roth, Schoenebeck, and Talwar~\cite{DBLP:conf/wine/GuptaRST10} independently studied non-monotone submodular maximization 
subject to cardinality constraint $k$ in the offline and secretary settings. 
In particular, Gupta~\etal~\cite{DBLP:conf/wine/GuptaRST10} 
obtained an offline $6.5$-approximation for this problem.


All of the aforementioned approximation algorithms are offline, where the whole input is given 
in the beginning, whereas the need for real-time analysis of rapidly changing data streams 
motivates the study of this problem in settings such as the dynamic model that we study in this paper.
 

\paragraph{Streaming algorithms.} 
The dynamic model that we study in this paper is closely related to 
the streaming model~\cite{DBLP:journals/jcss/AlonMS99,DBLP:journals/jacm/Indyk06}. 
However, the difference between these two models is 
that in the streaming model, we maintain a data structure using which 
we compute a solution at the end of the stream and so, the time to extract the solution 
is not important as we do it once. However, in the dynamic model, 
we need to maintain a solution after every update, thus, the update time 
of a dynamic algorithm should be as fast as possible. 

The known streaming algorithms~\cite{DBLP:conf/aaai/MirzasoleimanJ018,DBLP:conf/nips/FeldmanK018,DBLP:conf/icalp/FeldmanNS11} 
    work in the insertion-only streaming model and they do not support deletions as well as insertions.  
    Indeed, there are streaming algorithms~\cite{DBLP:conf/icml/0001ZK18,DBLP:conf/icml/MirzasoleimanK017} 
    for the monotone submodular maximization problem that support deletions, but the space and the update time 
    of these algorithms depend on the number of deletions which could be $\Omega(n)$, 
   where $n= |V|$ is the size of ground set $V$.


For monotone submodular maximization, 
Badanidiyuru, Mirzasoleiman, Karbasi, and Krause \cite{Badanidiyuru2014StreamingSM}
proposed an insertion-only streaming algorithm with a
$(2+ \eps)$-approximation guarantee under a cardinality constraint $k$.
Chekuri, Gupta, and Quanrud~\cite{10.1007/978-3-662-47672-7_26} presented (insertion-only) streaming algorithms for maximizing monotone 
and non-monotone submodular functions subject to $p$-matchoid constraint\footnote{
For non-monotone submodular maximization subject to cardinality constraint $k$, 
Chekuri, Gupta, and Quanrud~\cite{10.1007/978-3-662-47672-7_26} claimed that they obtained 
$4.7$-approximation algorithm. However, 
Alaluf, Ene, Feldman, Nguyen, and Suh~\cite{DBLP:conf/icalp/AlalufEFNS20} 
found an error in the proof of this approximation guarantee.}. 
Later, Mirzasoleiman, Jegelka, and Krause \cite{DBLP:conf/aaai/MirzasoleimanJ018} and 
Feldman, Karbasi, and Kazemi~\cite{DBLP:conf/nips/FeldmanK018} developed 
streaming algorithms with better approximation guarantees 
for maximizing a non-monotone function under a $p$-matchoid constraint. 
Currently, the best streaming algorithm for maximizing a non-monotone submodular function subject to a cardinality
constraint is due to Alaluf, Ene, Feldman, Nguyen, and Suh~\cite{DBLP:conf/icalp/AlalufEFNS20} 
whose approximation guarantee is $3.6+ \eps$, improving the $5.8$-approximation guarantee 
that was proposed by 
Feldman~\etal~\cite{DBLP:conf/nips/FeldmanK018}.


\paragraph{Dynamic algorithms.}
At NeurIPS 2020,
Lattanzi, Mitrovic, Norouzi{-}Fard, Tarnawski, and Zadimoghaddam~\cite{DBLP:conf/nips/LattanziMNTZ20} 

and Monemizadeh \cite{DBLP:conf/nips/Monemizadeh20} initiated the study of submodular maximization in the dynamic model.
They presented dynamic algorithms that maintain $(2+\eps)$-approximate solutions for 
 maximizing a monotone submodular function subject to cardinality constraint $k$.
Later, at STOC 2022, Chen and Peng~\cite{DBLP:journals/corr/abs-2111-03198} studied the complexity of this problem 
and they proved that developing a $c$-approximation dynamic algorithm for $c < 2$ is not possible 
unless we use a number of oracle queries polynomial in the size of ground set $V$.  
In 2023, Banihashem, Biabani, Goudarzi, Hajiaghayi, Jabbarzade, and Monemizadeh\cite{pmlr-v202-banihashem23a} developed an algorithm for monotone submodular maximization problem under cardinality constraint $k$ using a polylogarithmic amortized update time. 
Concurrent works of Banihashem, Biabani, Goudarzi, Hajiaghayi, Jabbarzade, and Monemizadeh\cite{bani2023dynamicmat} and 
Duetting, Fusco, Lattanzi, Norouzi-Fard, and Zadimoghaddam\cite{pmlr-v202-duetting23a} developed the first dynamic algorithms for monotone submodular maximization under a matroid constraint. Authors of \cite{bani2023dynamicmat} also improve the algorithm of \cite{DBLP:conf/nips/Monemizadeh20} for monotone submodular maximization subject to cardinality constraint $k$. 
There are also studies on the dynamic model of influence maximization, which shares similarities with submodular maximization~\cite{DBLP:conf/nips/Peng21}.

In this paper, for the first time, we study the generalized version of their problem by presenting an algorithm for maximizing the non-monotone submodular functions in the dynamic setting. 

\section{Dynamic algorithm}
\label{sec:dyn:alg}

In this section, we explain the algorithm that we use in the reduction that we stated in Metatheorm~\ref{thm:meta}. 
The pseudocode of our algorithm is  given in Algorithm~\ref{alg:init}, Algorithm~\ref{alg:update}, 
and Algorithm~\ref{alg:subset}.

Such reductions were previously proposed in the
offline model by \cite{DBLP:conf/wine/GuptaRST10}, and
later works extended this idea
to the streaming model \cite{10.1007/978-3-662-47672-7_26, DBLP:conf/aaai/MirzasoleimanJ018}.
We develop a reduction in the dynamic model inspired by these works, 
though in our proof,
we require a tighter analysis to obtain the approximation guarantee in our setting.

We consider an arbitrary time $t$ of sequence $\Xi$ where $V_t$ is 
the set of elements inserted before time $t$, but not deleted after their last insertion.  
Let $OPT_t^*=\max_{S \subseteq V_t: |S|\le k}  f(S)$. 
For simplicity, we drop $t$ from $V_t$ and $OPT_t^*$, when it is clear from the context.

{In the following, we assume that the value of $OPT$ is known. 
Although the exact value of $OPT^*$ is unknown, we can maintain parallel runs of 
our dynamic algorithm for different guesses of the optimal value.
By using $(1+\eps')^i$, where $i\in \mathbb{Z}$ as our guesses for the optimal value, one of our guesses $(1+\eps')$-approximates the value of $OPT^*$. We show that the output of our algorithm satisfies the approximation guarantee in the run whose $OPT$ $(1+\eps')$-approximates the value of $OPT^*$.
Later, in the appendix, we show that it is enough to consider each element $e$ only in runs $i$ 
for which we have 
$\frac{\eps'}{k}\cdot{(1+\eps')^i} \leq f(e) \leq (1+\eps')^i$.
This method increases the query complexity of our dynamic algorithm by only a factor of $\mO(\eps^{-1}\log{k})$.}

Our approach for solving the non-monotone submodular maximization is to first run the thresholding algorithm with input set $V$ 
to find a set $S_1$ of at most $k$ elements. 
Since $f$ is non-monotone, subsets of $S_1$ might have a higher submodular value than $f(S_1)$.
Then, we use an $\alpha$-approximation algorithm (for $0 < \alpha \leq 1$) to choose
a set $S'_1 \subseteq S_1$ with guarantee $\Ex{f(S'_1)} \ge \alpha\cdot \max_{C \subseteq S_1} f(C)$. 
Next, we run the thresholding algorithm with the input set 
$V \backslash S_1$ and compute a set $S_2$. 
At the end, we return set $S = \arg\max_{C \in \{S_1,S'_1,S_2\}} f(C)$. 
Intuitively, for an optimal solution $S^*$, if $f(S_1 \cap S^*)$ is a good approximation of $OPT$, 
then $f(S'_1)$ is a good approximation of $OPT$. On the other hand, if both $f(S_1)$ and $f(S_1 \cap S^*)$ are small 
with respect to $OPT$, 
then we can ignore the elements of $S_1$ and show that we can find a set $S_2 \subseteq V\setminus S_1$ 
of size at most $k$ whose submodular value is a good approximation of $OPT$. 
The following lemma proves that the submodular value of $S$ is a reliable approximation of the optimal solution. 
The formal proof of this lemma can be found in Section~\ref{subsec:analysis}.

\begin{lemma}[Approximation Guarantee]
\label{lem:approxi}
Assuming that $OPT^* \in [\frac{OPT}{1 + \eps'}, OPT]$, the expected submodular value of set $S$ 
is $\Ex{f(S)} \ge (1-\mO(\eps'))\frac{OPT^*}{6+\frac{1}{\alpha}}$. 
\end{lemma}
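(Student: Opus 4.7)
The plan is to decompose $S^*$ into its part inside $S_1$ and its part outside, bound each piece separately, and then combine the estimates using the three candidates $S_1$, $S_1'$, $S_2$ that the algorithm compares. Concretely, write $S^* = O_1 \cup O_2$ with $O_1 := S^* \cap S_1$ and $O_2 := S^* \setminus S_1$. Since $O_1 \cap O_2 = \emptyset$ and $f$ is non-negative, submodularity immediately gives $OPT^* = f(S^*) \leq f(O_1) + f(O_2)$. The first piece is handled by the guarantee on $S_1'$: because $O_1 \subseteq S_1$, we have $\Ex{f(S_1')} \geq \alpha \cdot \max_{C \subseteq S_1} f(C) \geq \alpha f(O_1)$, so $f(O_1) \leq \frac{1}{\alpha}\Ex{f(S_1')} \leq \frac{1}{\alpha}\Ex{f(S)}$.

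Bounding $f(O_2)$ proceeds by case analysis on the two thresholding invocations, according to Definition~\ref{def:tau:threshold}. If $|S_1| = k$ then Property~1(a) gives $f(S_1) \geq k\tau$, while if $|S_1| < k$ but $|S_2| = k$ the same property applied to the $S_2$ invocation gives $f(S_2) \geq k\tau$; in both cases $\Ex{f(S)} \geq k\tau$ and the choice of $\tau$ below is enough to finish. The main case is $|S_1| < k$ and $|S_2| < k$. Here Property~1(b) gives $\Delta(v | S_1) < \tau$ for every $v \in O_2 \subseteq V_t \setminus S_1$, and $\Delta(v | S_2) < \tau$ for every $v \in O_2 \setminus S_2 \subseteq (V_t \setminus S_1) \setminus S_2$. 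A standard telescoping argument with diminishing marginals then yields $f(S_i \cup O_2) \leq f(S_i) + k\tau$ for $i \in \{1,2\}$. Since $S_1 \cap S_2 = \emptyset$, the sets $S_1 \cup O_2$ and $S_2 \cup O_2$ meet in exactly $O_2$, so one application of submodularity gives $f(S_1 \cup O_2) + f(S_2 \cup O_2) \geq f(S_1 \cup S_2 \cup O_2) + f(O_2) \geq f(O_2)$. Chaining the three inequalities produces $f(O_2) \leq f(S_1) + f(S_2) + 2k\tau$.

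Putting the pieces together, $OPT^* \leq \tfrac{1}{\alpha}\Ex{f(S_1')} + f(S_1) + f(S_2) + 2k\tau \leq (2 + \tfrac{1}{\alpha})\Ex{f(S)} + 2k\tau$, where the last step uses that $f(S)$ dominates each of the three candidates. Choosing $\tau = OPT / ((6 + \tfrac{1}{\alpha}) k)$ and using $OPT^* \in [OPT/(1+\eps'), OPT]$ so that $k\tau \leq (1+\eps')\, OPT^*/(6 + \tfrac{1}{\alpha})$, the short cases $|S_1| = k$ or $|S_2| = k$ already give $\Ex{f(S)} \geq k\tau \geq OPT^*/(6 + \tfrac{1}{\alpha})$, while the substantive case rearranges to $\Ex{f(S)} \geq (OPT^* - 2k\tau)/(2 + \tfrac{1}{\alpha}) \geq (1 - O(\eps'))\, OPT^*/(6 + \tfrac{1}{\alpha})$, completing the argument.

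The main obstacle is upper-bounding $f(O_2)$ in the absence of monotonicity: the thresholding marginals for $S_2$ alone only tell us that $f(S_2 \cup O_2)$ is close to $f(S_2)$, but $f(S_2 \cup O_2)$ need not dominate $f(O_2)$ in the non-monotone setting, so one cannot simply conclude that $f(O_2) \lesssim f(S_2)$. The key workaround is to couple both thresholding outputs; the disjointness $S_1 \cap S_2 = \emptyset$ forces $(S_1 \cup O_2) \cap (S_2 \cup O_2) = O_2$, and then a single application of submodularity converts the two separate marginal-sum bounds into a usable upper bound on $f(O_2)$ itself.
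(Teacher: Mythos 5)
Your proof is correct, and it reaches the same conclusion via a genuinely tighter combination argument than the paper uses; a few remarks are in order.

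The core submodular manipulations are, after unfolding notation, identical to the paper's. Note that $S_1\cup S^* = S_1\cup O_2$ (since $O_1\subseteq S_1$), so your bound $f(S_1)\ge f(S_1\cup O_2)-k\tau$ is literally the paper's application of Lemma~\ref{lm:anupam_threshold} to $S_1$ with $C=S^*$, and your bound for $S_2$ matches theirs with $C=S^*\setminus S_1 = O_2$. Likewise, your pairing $f(S_1\cup O_2)+f(S_2\cup O_2)\ge f(S_1\cup S_2\cup O_2)+f(O_2)$ is the paper's submodularity step, and $f(O_1)+f(O_2)\ge f(S^*)$ is their $f(S_1\cap S^*)+f(S^*\setminus S_1)\ge f(S^*)$.

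Where you genuinely diverge is the combination step. The paper splits on whether $f(S_1\cap S^*)$ is above or below the threshold $\tau k/(2\alpha)$; in the ``large'' case only $S_1'$ is used, and in the ``small'' case only $S_1,S_2$ are used together with $\max(f(S_1),f(S_2))\ge\tfrac12(f(S_1)+f(S_2))$, which loses a factor two. You instead keep all three candidates in play at once: conditional on $S_1,S_2$,
\begin{align*}
OPT^* &\le f(O_1)+f(O_2) \le \tfrac1\alpha\,\Exx{f(S_1')\mid S_1}+f(S_1)+f(S_2)+2k\tau\\
&\le \paranth{2+\tfrac1\alpha}\max\paranth{\Exx{f(S_1')\mid S_1},\,f(S_1),\,f(S_2)}+2k\tau,
\end{align*}
and then conditioning plus Jensen gives $\Exx{f(S)\mid S_1,S_2}\ge\bigl(OPT^*-2k\tau\bigr)/(2+\tfrac1\alpha)$. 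Balancing this against the $|S_i|=k$ branch ($\ge k\tau$) at the optimal choice $\tau=OPT/\bigl((4+\tfrac1\alpha)k\bigr)$ actually yields approximation ratio $4+\tfrac1\alpha$, i.e., $6+\eps$ with local search and $8+\eps$ with random sampling --- strictly better than the paper's $6+\tfrac1\alpha$ from the two-case-split-plus-halving route. You did not push for this: you set $\tau=OPT/\bigl((6+\tfrac1\alpha)k\bigr)$, which is enough for the stated lemma but leaves slack in the ``both $<k$'' case.

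Two small caveats. First, you silently changed the algorithm's threshold: the paper's \textsc{Initialization} fixes $\tau=OPT/\bigl(k(3+\tfrac1{2\alpha})\bigr)$, exactly twice your $\tau$. Your argument in fact goes through verbatim with the paper's $\tau$ (both cases still land on $(1-\mO(\eps'))\,OPT^*/(6+\tfrac1\alpha)$), so this is cosmetic rather than a gap, but if you intend to prove the lemma \emph{for the algorithm as written} you should use the paper's value. Second, the display ``$OPT^*\le\frac1\alpha\Ex{f(S_1')}+f(S_1)+f(S_2)+2k\tau$'' mixes a deterministic bound, a conditional expectation, and realizations of $f(S_1),f(S_2)$; the clean way to state it is to condition on $S_1,S_2$ throughout (as above), apply the $f(O_2)$ bound only on the event $\{|S_1|<k,\,|S_2|<k\}$, and then observe that all three events in the case split yield the same lower bound on $\Exx{f(S)\mid S_1,S_2}$, so the unconditional expectation inherits it. With those clarifications, the proof is complete and sound.
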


Next, we explain the steps of our reduction in detail. 

Let us first fix the threshold $\tau=\frac{OPT}{k(3+1/(2\alpha))}$. Then, we fix a $\tau$-thresholding dynamic algorithm (for example, \cite{DBLP:conf/nips/Monemizadeh20} or \cite{bani2023dynamicmat}) and suppose we denote it by $\dynamicmonotone$. 
Before sequence $\Xi$ of updates starts, we create two independent instances $\mathcal{I}_1$ and $\mathcal{I}_2$ of $\dynamicmonotone$. 
The first instance will maintain set $S_1$ 
and the second instance will maintain set $S_2$. 
For instance $\mathcal{I}_i$ where $i \in \{1,2\}$, we consider the following subroutines:
\begin{itemize}
  \item $\InsertF_{\mathcal{I}_i}(v)$: This subroutine inserts an element $v$ to  instance $\mathcal{I}_i$.
    \item $\DeleteF_{\mathcal{I}_i}(v)$:
    Invoking this subroutine will delete the element $v$ from instance $\mathcal{I}_i$.
    \item $\extractF_{\mathcal{I}_i}$: This subroutine returns the maintained set (of size at most $k$) of 
    $\mathcal{I}_i$.
\end{itemize}


\paragraph{Extracting $S_1$.} 
After the update at time $t$,
first, we would like to set $Z=S^-_{1} \cup \{v\}$ or $Z=S^-_{1}\setminus \{v\}$, if the update is the insertion of an element $v$ or the deletion of an element $v$, respectively, where $S^-_{1}$ is the set $S_1$ that instance $\mathcal{I}_1$ maintains just before this update.
To find set $S_1^-$, we just need to invoke subroutine $\extractF_{\mathcal{I}_1}$.
If the update is an insertion, we insert it into instance $\mathcal{I}_1$ using 
$\InsertF_{\mathcal{I}_1}(v,\tau)$, and if the update is a deletion, we delete $v$ 
{from both $\mathcal{I}_1$ and $\mathcal{I}_2$ using 
$\DeleteF_{\mathcal{I}_1}(v)$ and $\DeleteF_{\mathcal{I}_2}(v)$.} 
We then invoke $\extractF_{\mathcal{I}_1}$ once again to return set $S_1$.

\paragraph{Extracting $S'_1$.} 
Buchbinder~\etal~\cite{DBLP:journals/siamcomp/BuchbinderFNS15}
developed  a method to extract a subset $S'_1 \subseteq S_1$ whose submodular value is 
a good approximation of $\max_{C \subseteq S_1} f(C)$.
In this algorithm, we start with two solutions $\emptyset$ and $S_1$. 
The algorithm considers the elements (in arbitrary order) one at a time. 
For each element, it determines whether it should be added
to the first solution or removed from the second solution.
Thus, after a single pass over set  $S_1$, both
solutions completely coincide, which is the solution that the
algorithm outputs. They show that a (deterministic) greedy choice in each
step obtains $3$-approximation of the best solution in $S_1$. 
However, if we combine this greedy choice with randomization, 
we can obtain a $2$-approximate solution.
Since we do a single pass over set $S_1$, the number of oracle queries is $\mO(|S_1|)$.

The second algorithm that we can use to extract $S'_1$ is a random sampling algorithm proposed by 
Feige~\etal~\cite{DBLP:journals/siamcomp/FeigeMV11},
which choose every element in $S_1$ with probability $1/2$.
They show that this random sampling returns a set $S'_1$ whose approximation factor is $1/4$ of $\max_{C \subseteq S_1} f(C)$, 
and its number of oracle calls is $\mO(1)$.  
We denote either of these two methods by \subsetselect.

\paragraph{Extracting $S_2$.}
Next, we would like to update the set $S_2$ that is maintained by instance $\mathcal{I}_2$.  
To do this, for every element 
$u \in Z\backslash S_1$, we add it to  $\mathcal{I}_2$ using 
$\InsertF_{\mathcal{I}_2}(u,\tau)$, and for every element 
$u \in S_1\backslash Z$, 
we delete it from  $\mathcal{I}_2$ using $\DeleteF_{\mathcal{I}_2}(u,\tau)$. 
Finally, when $\mathcal{I}_2$ exactly includes all the current elements other than the ones in $S_1$,  we call subroutine $\extractF_{\mathcal{I}_2}$ to return set $S_2$.

\begin{corollary}
\label{corol:approx}
    We obtain the $(8+\eps)$ approximation guaranty stated in the Metatheorem ~\ref{thm:meta} by using the local search method for our $\subsetselect$, and we get the $(10+\eps)$ approximation guaranty by utilizing the random sampling method for our $\subsetselect$ subroutine.
\end{corollary}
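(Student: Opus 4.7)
The plan is to derive both approximation ratios as direct consequences of Lemma~\ref{lem:approxi}, combined with the known guarantees of the two \subsetselect\ subroutines, and then to verify that the per-update query complexity matches what is claimed in Metatheorem~\ref{thm:meta}. The key observation is that the parameter $\alpha$ in Lemma~\ref{lem:approxi} is exactly the guarantee of the chosen \subsetselect\ subroutine, so each corollary claim reduces to plugging in the correct $\alpha$.

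First I would identify the value of $\alpha$ for each variant. For the randomized local search of Buchbinder~\etal~\cite{DBLP:journals/siamcomp/BuchbinderFNS15}, the returned $S'_1\subseteq S_1$ satisfies $\Ex{f(S'_1)}\ge \tfrac{1}{2}\max_{C\subseteq S_1} f(C)$, so $\alpha = 1/2$ and Lemma~\ref{lem:approxi} yields $\Ex{f(S)} \ge (1-\mO(\eps'))\,OPT^*/8$. For the random-sampling method of Feige~\etal~\cite{DBLP:journals/siamcomp/FeigeMV11}, we have $\alpha = 1/4$, and plugging into Lemma~\ref{lem:approxi} yields $\Ex{f(S)}\ge (1-\mO(\eps'))\,OPT^*/10$. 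Choosing $\eps' = c\cdot \eps$ for a sufficiently small constant $c > 0$ absorbs the $(1-\mO(\eps'))$ factor into the $+\eps$ slack of the final ratio, giving the $(8+\eps)$ and $(10+\eps)$ bounds. Since $OPT^*$ is unknown, we must run $\mO(\eps^{-1}\log k)$ parallel instances with geometric guesses $(1+\eps')^i$, but the element-filter described after Lemma~\ref{lem:approxi} ensures each element participates in only $\mO(\log k)$ runs; the approximation guarantee holds on the unique run whose guess lies within $(1+\eps')$ of $OPT^*$.

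Next I would decompose the per-update query complexity into three pieces and bound them separately: (i) the single \InsertF/\DeleteF\ call on $\mathcal{I}_1$, which costs $g(n,k)$; (ii) the forwarded updates to $\mathcal{I}_2$, one per element of the symmetric difference $S_1\triangle S_1^-$; and (iii) one call to \subsetselect. By Property~2 of the $\tau$-thresholding algorithm, the number of updates forwarded to $\mathcal{I}_2$ is bounded by the number of queries made by $\mathcal{I}_1$, hence by $g(n,k)$, and it is also trivially bounded by $k$ since $|S_1|,|S_1^-|\le k$; thus at most $\min(k,g(n,k))$ updates are passed to $\mathcal{I}_2$, each costing $g(n,k)$. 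Finally, \subsetselect\ contributes $\mO(k)$ in the local-search variant and $\mO(1)$ in the random-sampling variant. Multiplying the sum by the $\mO(\eps^{-1}\log k)$ parallel-runs factor recovers exactly the bounds stated in Metatheorem~\ref{thm:meta}.

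The step I expect to require the most care is the amortization in (ii): the bound on $|S_1\triangle S_1^-|$ is only meaningful in an amortized sense, and the charging of $\mathcal{I}_2$ updates to the $g(n,k)$ budget of $\mathcal{I}_1$ needs to be combined cleanly with the amortization already present in the expected $g(n,k)$ guarantee. Once that charging is handled, the two approximation guarantees fall out immediately from Lemma~\ref{lem:approxi} with $\alpha = 1/2$ and $\alpha = 1/4$, completing the proof of the corollary.
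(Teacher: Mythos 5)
Your core argument is exactly the paper's: Lemma~\ref{lem:approxi} gives $\Ex{f(S)} \ge (1-\mO(\eps'))\,OPT^*/(6+1/\alpha)$, and substituting $\alpha = 1/2$ (randomized local search) or $\alpha = 1/4$ (uniform random sampling) yields denominators $8$ and $10$ respectively, with the $(1-\mO(\eps'))$ factor absorbed into the $+\eps$ slack by an appropriate choice of $\eps'$. That is the whole content of the paper's one-line proof of this corollary, and you have it right.

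One remark on scope: this corollary asserts \emph{only} the approximation guarantees in Theorem~\ref{thm:meta}, not the query-complexity bounds. The latter are established separately in Lemma~\ref{lem:time}, and the proof of Theorem~\ref{thm:meta} then combines Corollary~\ref{corol:approx} (for the ratio) with Lemma~\ref{lem:time} (for the queries). Your paragraphs on decomposing the per-update queries into the $\mathcal{I}_1$ update, the forwarded $\mathcal{I}_2$ updates bounded by $\min(k, g(n,k))$ via Property~2, and the \subsetselect{} cost are all correct, and they essentially reproduce the proof of Lemma~\ref{lem:time}, but they are not needed to prove the corollary as stated. Trimming that material would make it clear you recognize where the division of labor lies.
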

\begin{proof}
    These are immediate results of Lemma~\ref{lem:approxi}, and $\alpha$ being $\frac{1}{2}$ and $\frac{1}{4}$ in the local search method and random sampling method, respectively.
\end{proof}


\begin{algorithm}[h]
  \caption{Initialization$(k, OPT)$}
  \label{alg:init}
  \begin{algorithmic}[1]
        \State $\tau \gets \frac{OPT}{k(3+1/(2\alpha))}$, where $\alpha$ is $\frac{1}{2}$ or $\frac{1}{4}$ based on the selection of algorithm for $\subsetselect$.
        \State Instantiate two independent instances $\mathcal{I}_1$ and $\mathcal{I}_2$  of 
      $\dynamicmonotone$ for monotone submodular maximization under cardinality constraint $k$ using $\tau$
    \end{algorithmic}
\end{algorithm}

\begin{algorithm}[h]
  \caption{\textsc{Update}$(v)$}
  \label{alg:update}
  \begin{algorithmic}[1]
    \State $Z \gets \extractF_{\mathcal{I}_1}$  
    \If{\textsc{Update}$(v)$ is an insertion}
        \State Invoke $\InsertF_{\mathcal{I}_1}(v)$, \
        {$Z \gets Z \cup \{v\}$}
    \Else
        \State Invoke $\DeleteF_{\mathcal{I}_1}(v)$, \ $\DeleteF_{\mathcal{I}_2}(v)$, \
        {$Z \gets Z \setminus \{v\}$}
    \EndIf
    \State $S_{1} \gets \extractF_{\mathcal{I}_1}$
    \State $S_1' \gets \subsetselect{}(S_1)$
    \For{$u \in S_1\backslash Z$}
        \State $\DeleteF_{\mathcal{I}_2}(u)$
    \EndFor
    \For{$u \in Z\backslash S_1$}
        \State $\InsertF_{\mathcal{I}_2}(u)$
    \EndFor
    \State $S_{2} \gets \extractF_{\mathcal{I}_2}$
    \State Return $ \argmax_{C \in \{S_1, S_1', S_2\}} f(C)$
  \end{algorithmic}
\end{algorithm}


\begin{algorithm}[h]
  \caption{\subsetselect$(S)$}
  \label{alg:subset}
  \begin{algorithmic}[1]
    \Function{\uniformsubset}{$S$}
      \State $T\gets \emptyset$
      \For{$s\in S$}
        \If{$Coin(\frac{1}{2}$)} \Comment{With probability $\frac{1}{2}$}
          \State $T \gets T \cup \{s\}$
        \EndIf
      \EndFor
      \State \Return $T$
    \EndFunction
    \Function{\localsearchsubset}{$S$} 
        \State $X_0 \gets \emptyset$, \ $Y_0 \gets S$.
        \For{$i=1$ to $|S|$}
            \State  $a_i  \gets f(X_{i-1}\cup \{s_i\})-f(X_{i-1})$, \  $b_i  \gets f(Y_{i-1}\setminus \{s_i\})-f(Y_{i-1})$
            \State $a'_i \gets \max(a_i,0)$, \ $b'_i \gets \max(b_i,0)$
            \If{$a'_i=b'_i=0$} 
            	$a'_i/(a'_i+b'_i)=0$
	    \EndIf
            \State \textbf{with probability} $a'_i/(a'_i+b'_i)$ \textbf{do:}
            \State $X_i\gets X_{i-1} \cup \{s_i\}$, \ $Y_i \gets Y_{i-1}$
            \State  \textbf{else do:} $X_i\gets X_{i-1}$, \ $Y_i \gets Y_{i-1}\setminus \{s_i\}$
        \EndFor
        \State \Return  $X_{|S|}$ (or equivalently $Y_{|S|}$)
    \EndFunction
  \end{algorithmic}
\end{algorithm}



\paragraph{Analysis.}
\label{subsec:analysis}
In this section, we prove the correctness of our algorithms and analyze the number of oracle queries 
of our algorithms, which finishes the proof of Theorems~\ref{thm:meta}. 

Consider an arbitrary time $t$. Let $S_t$ be the reported set of $\dynamicmonotone$ at time $t$. 
Recall that $V_t$ is the ground set at time $t$, and we drop the $t$ for simplicity, so we use $V$ and $S$ to denote $V_t$ and $S_t$. 
We first present Lemma~\ref{lm:anupam_threshold} whose proof is given in the appendix. 
Then we proceed to prove Lemma~\ref{lem:approxi} and Theorem \ref{thm:meta}

\begin{lemma}
\label{lm:anupam_threshold}
Suppose that set $S$ satisfies Property 1.b of Definition \ref{def:tau:threshold}. 
It means that $S$ has less than $k$ elements and for any $v\in V\setminus S$, the marginal gain $\Delta(v|S) < \tau$.
Then, for any arbitrary subset $C \subseteq V$, we have $f(S) \ge f(S \cup C) - |C| \cdot \tau$.
\end{lemma}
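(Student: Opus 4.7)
The plan is to establish the inequality $f(S) \ge f(S \cup C) - |C| \cdot \tau$ via a telescoping argument that uses submodularity to decompose $f(S \cup C) - f(S)$ into a sum of marginal gains, each of which can be upper bounded by the threshold $\tau$ thanks to Property 1.b.

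More concretely, I would first reduce to the case where $C \cap S = \emptyset$ by observing $S \cup C = S \cup (C \setminus S)$ and $|C \setminus S| \le |C|$, so it suffices to control $f(S \cup (C \setminus S)) - f(S)$ in terms of $|C \setminus S| \cdot \tau$. Next, I would enumerate $C \setminus S = \{c_1, \dots, c_m\}$ in an arbitrary order and telescope:
\begin{equation*}
f(S \cup C) - f(S) \;=\; \sum_{i=1}^{m} \Bigl[ f\bigl(S \cup \{c_1,\dots,c_i\}\bigr) - f\bigl(S \cup \{c_1,\dots,c_{i-1}\}\bigr) \Bigr].
\end{equation*}
Applying submodularity to each term with the chain $S \subseteq S \cup \{c_1, \dots, c_{i-1}\}$ and element $c_i$, the $i$-th summand is at most $\Delta(c_i \mid S)$. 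Since $c_i \in V \setminus S$, Property 1.b of Definition \ref{def:tau:threshold} gives $\Delta(c_i \mid S) < \tau$. Summing yields $f(S \cup C) - f(S) \le m\tau \le |C| \cdot \tau$, which rearranges to the claimed inequality.

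There is essentially no substantive obstacle here: the argument is the standard ``threshold plus submodularity'' bound and does not require monotonicity of $f$, since the marginal gain bound $\Delta(c_i \mid S) < \tau$ from Property 1.b holds unconditionally for every element of $V \setminus S$ (including elements whose marginal might even be negative, which only strengthens the inequality). The only minor care point is to apply submodularity in the correct direction so that marginals taken against the growing set are bounded by marginals against the smaller set $S$, which is exactly the content of the diminishing returns formulation stated in the Preliminaries.
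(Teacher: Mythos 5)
Your proof is correct and uses the same core idea as the paper: bound $f(S\cup C)-f(S)$ by $\sum_{v\in C}\Delta(v\mid S)$ via submodularity and then bound each term by $\tau$ using Property 1.b. The paper phrases this as a proof by contradiction with an averaging step (if the sum exceeded $|C|\tau$ some element would have marginal gain above $\tau$), while you argue directly by telescoping; the two are logically equivalent, with your reduction to $C\setminus S$ playing the role of the paper's observation that any witness $v^*$ with $\Delta(v^*\mid S)>\tau>0$ must lie outside $S$.
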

\begin{proofof}{Lemma~\ref{lem:approxi}}
Assume that at a fixed time $t$, $OPT^*$ and $S^*$ are the optimal value and an optimal solution for the submodular maximization of function $f$ under cardinality constraint $k$. This means that $|S^*| \le k$ and $f(S^*) = OPT^*$.
Recall that $\tau=\frac{OPT}{k(3+\frac{1}{2\alpha})}$, where $OPT$ is our guess for the optimal value.
Also, by assumption we have $OPT^* \in [\frac{OPT}{1 + \epsilon'}, OPT]$, or equivalently $OPT \in [OPT^*, (1 + \epsilon')OPT^*]$.

To prove the lemma, we claim that 
$\max(\Ex{f(S_1)}, \Ex{f(S_1')}, \Ex{f(S_2)}) \geq (1-\mO(\eps'))\frac{OPT^*}{6 + \frac{1}{\alpha}}$. 

Suppose this claim is true. 
Using Jensen's inequality~\cite{DBLP:books/cu/10/D2010}, we have
$\Ex{\max(f(S_1), f(S_1'), f(S_2))} \geq \max(\Ex{f(S_1)}, \Ex{f(S_1')}, \Ex{f(S_2)})$ 
, which yields
$\Ex{f(S)} \geq (1-\mO(\eps'))\frac{OPT^*}{6+1/\alpha}$. 

To prove the claim,
we consider two cases. The first case is when $f(S_1 \cap S^*) \geq \frac{\tau k}{2\alpha}$ 
and the second case is if $f(S_1 \cap S^*) < \frac{\tau k}{2\alpha}$.

Suppose the first case is true. 
Then, the subset selection algorithm (either random sampling method or local search) 
returns $S_1'$ for which $\Ex{f(S_1')} \geq \alpha\cdot\max_{S\subseteq S_1}f(S)$.
Since $S_1 \cap S^* \subseteq S_1$, we have 

For the latter case, we show that $\Ex{f(S_1)}+ \Ex{f(S_2)} \geq (1 - \mO(\eps'))\frac{OPT^*}{3+1/2\alpha}$, inferring $\max{(\Ex{f(S_1)}, \Ex{f(S_2)})} \geq (1 - \mO(\eps'))\frac{OPT^*}{6+1/\alpha}$. 
Indeed, since $S_1$ and $S_2$ are reported by an $\tau$-thresholding algorithm, 
if $|S_1| = k$ or $|S_2| = k$, then $\max(\Ex{f(S_1)}, \Ex{f(S_2)})$ is at least $\tau k = \frac{OPT}{3+1/2\alpha}$
by the first property of $\tau$-thresholding algorithms.

Now suppose that $|S_1|, |S_2| < k$, which means that Property 1.b of Definition \ref{def:tau:threshold} holds for $S_1$ and $S_2$.
Therefore, we have $f(S_1) \geq f(S_1\cup S^*)-\tau|S^*|$ and $f(S_2) \geq f(S_2 \cup (S^*\setminus S_1))-\tau|S^*\setminus S_1|$ by Lemma \ref{lm:anupam_threshold}. Besides, we have $f(S_1 \cap S^*) - \frac{\tau k}{2\alpha} < 0$. Therefore,
\[
        f(S_1)+ f(S_2) \geq f(S_1\cup S^*)-\tau|S^*| + f(S_2 \cup (S^*\setminus S_1))\\ 
                     -\tau|S^*\setminus S_1| + f(S_1 \cap S^*) - \frac{\tau k}{2\alpha} \enspace .
\]
Since $|S^*\setminus S_1| \leq |S^*| \leq k$ we have
\[
        f(S_1)+ f(S_2) \geq f(S_1\cup S^*) + f(S_2 \cup (S^*\setminus S_1)) \\
+ f(S_1 \cap S^*) - (2+1/(2\alpha))\tau k \enspace .
\]

Since $S_1\cap S_2 = \emptyset$ and $f$ is submodular, we have 
$f(S_1 \cup S^*) + f (S_2 \cup (S^*\setminus S_1)) \geq
f(S_1 \cup S_2 \cup S^*) + f(S^*\setminus S_1)$.
Additionally, by the submodularity and non-negativity of $f$, we have
$f(S_1 \cap S^*) \geq f(S^*) - f (S^*\setminus S_1)$, because $f (S^*\setminus S_1) + f(S_1 \cap S^*) \geq f(S^*) + f(\emptyset)$. By adding the last two inequalities and using the non-negativity of $f$ once again, we get
$f(S_1 \cup S^*) + f (S_2 \cup (S^*\setminus S_1))+ f(S_1 \cap S^*) \geq f(S_1 \cup S_2 \cup S^*) + f(S^*) \geq f(S^*) = OPT^*$.
By putting everything together we have,
\[
    f(S_1)+ f(S_2) \geq OPT^* - (2+1/(2\alpha))\tau k = OPT^* - (\frac{4\alpha + 1}{2\alpha})(\frac{OPT(2\alpha)}{6\alpha + 1}).
\]
By using the assumption that $OPT \leq (1 + \epsilon')OPT^*$, we have, 
\[
    f(S_1)+ f(S_2) \geq OPT^* (1 - (\frac{(4\alpha + 1)(1 + \epsilon')}{6\alpha + 1})) \geq OPT^* (\frac{2\alpha-\epsilon'(4\alpha + 1)}{6\alpha + 1}) = (1-\mO(\eps'))\frac{OPT^* }{3 + 1/(2\alpha)}.
\]
\end{proofof}

\begin{proofof}{Theorem \ref{thm:meta}}

As previously discussed, we've established in Lemma~\ref{lem:approxi} and Corollary \ref{corol:approx} that utilizing the local search method for the \subsetselect{} subroutine results in an approximation ratio of $(8+\eps)$, whereas the random sampling method achieves an approximation ratio of $(10+\eps)$. Thus, the only remaining aspect to address in proving this theorem is proving the query complexity of our proposed algorithm. In Lemma~\ref{lem:time}, we bound the number of queries made in each run of our algorithm per update, proving the bounds given in Theorem~\ref{thm:meta} by considering the extra $\mO(\eps^{-1}\log{k})$-factor caused by our parallel runs.
\end{proofof}
\begin{lemma}
\label{lem:time}
     Let the random variable $Q_t$ denote the number of oracle calls that our algorithm \blue{in Theorem~\ref{thm:meta}} makes at time $t$ in each of the parallel runs. Depending on whether the expected or expected amortized number of oracle calls made by the thresholding algorithm \dynamicthreshold  per each update is $\mO(g(n,k))$, 
    if we choose the local search method as our \subsetselect{} subroutine, we have
    \[
        \Ex{Q_t} \in \mO (\min(k \cdot g(n, k), g(n, k)^2))
        + \mO(k) \enspace,
    \]
    or 
    \[
        \Ex{\sum_{t=1}^T Q_t} \in \mO (T \cdot \min(k \cdot g(n, k), g(n, k)^2))
        + \mO(k) \enspace,
    \]
    
    and if we choose the random sampling method as our \subsetselect{} subroutine, we have
    \[
        \Ex{Q_t}  \in \mO (\min(k \cdot g(n, k), g(n, k)^2)) \enspace ,
    \]
    or 
    \[
        \Ex{\sum_{t=1}^T Q_t} \in \mO (T \cdot \min(k \cdot g(n, k), g(n, k)^2)) \enspace
    \]
\end{lemma}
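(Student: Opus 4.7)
The plan is to decompose $Q_t$ along the operations actually executed in Algorithm~\ref{alg:update}. The query-incurring steps are: (i) the single $\InsertF$ or $\DeleteF$ on $\mathcal{I}_1$; (ii) in the deletion branch, the direct $\DeleteF_{\mathcal{I}_2}(v)$; (iii) the call to $\subsetselect(S_1)$; (iv) the two loops that dispatch one $\InsertF_{\mathcal{I}_2}$ or $\DeleteF_{\mathcal{I}_2}$ per element of $S_1 \triangle Z$; and (v) the three $f$-evaluations inside the final $\argmax$. The $\extractF$ calls and the bookkeeping of $Z$ issue no oracle queries. By the hypothesis of the theorem, each $\InsertF$/$\DeleteF$ on either instance contributes (expected or amortized) $\mO(g(n,k))$ queries; inspecting Algorithm~\ref{alg:subset}, $\localsearchsubset$ uses two marginal-gain computations per element of $S_1$, i.e.\ $\mO(k)$ queries, while $\uniformsubset$ makes none. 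Hence only the loop term~(iv) is non-trivial to bound.

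The key combinatorial step is to control $|S_1 \triangle Z|$. Let $S_1^{-}$ denote the set that $\mathcal{I}_1$ maintained immediately before the current update; then $Z$ is either $S_1^-\cup\{v\}$ or $S_1^-\setminus\{v\}$, so by the triangle inequality for symmetric differences $|S_1 \triangle Z| \le |S_1 \triangle S_1^{-}| + 1$. I now apply Property~2 of Definition~\ref{def:tau:threshold} to $\mathcal{I}_1$: the number of elements whose membership in the maintained set flips during an update is at most the number of oracle queries $\mathcal{I}_1$ performs during that update. Combining this with the assumed bound gives $\Exx{|S_1 \triangle S_1^{-}|} \le \mO(g(n,k))$, whence $\Exx{|S_1 \triangle Z|} \le \mO(g(n,k))$. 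Since both $|S_1|$ and $|S_1^-|$ are capped at $k$, the trivial bound $|S_1 \triangle Z| \le 2k+1$ also holds, so overall $\Exx{|S_1\triangle Z|} \le \mO(\min(g(n,k),k))$.

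To convert this into the per-update expected bound, I exploit the fact that $\mathcal{I}_1$ and $\mathcal{I}_2$ are instantiated independently (Algorithm~\ref{alg:init}). Conditioning on the $N = |S_1 \triangle Z|$ updates routed to $\mathcal{I}_2$, the expected cost of processing them is $\mO(N \cdot g(n,k))$, so by the tower property the loop contributes $\mO(\Exx{|S_1 \triangle Z|} \cdot g(n,k)) = \mO(\min(k g(n,k), g(n,k)^2))$; contributions (i), (ii), (v) add another $\mO(g(n,k))$, absorbed into the same expression. For the amortized version I sum over $t\in[T]$: the total number of updates dispatched to $\mathcal{I}_2$ is $\sum_t|S_1^t \triangle Z^t| + T \le \sum_t|S_1^t \triangle S_1^{t-1}| + 2T$, which by Property~2 telescopes into the total queries that $\mathcal{I}_1$ itself performed, i.e.\ $\mO(T g(n,k))$, and is also at most $\mO(Tk)$ trivially. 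The amortized cost guarantee on $\mathcal{I}_2$ then gives $\mO(T\min(kg(n,k),g(n,k)^2))$; adding the $\subsetselect$ contribution ($\mO(k)$ per step for local search, $0$ for uniform) and the $\mO(g(n,k))$ from steps (i), (ii), (v) yields the stated bounds.

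The principal obstacle is the interplay between the two sources of randomness: $\mathcal{I}_1$'s coin flips determine how many dispatches $\mathcal{I}_2$ receives, while $\mathcal{I}_2$'s amortized guarantee only bounds the \emph{aggregate} cost over its own update sequence. Property~2 supplies precisely the right accounting — any unusually large symmetric difference $|S_1^t \triangle S_1^{t-1}|$ forces $\mathcal{I}_1$ to spend a matching number of queries at step~$t$, so the ``extra'' dispatches to $\mathcal{I}_2$ can be charged back to $\mathcal{I}_1$'s query budget. Independence of the two instances then justifies the Wald-style decoupling used above, and no per-step spike in $\mathcal{I}_2$'s work invalidates the aggregate argument.
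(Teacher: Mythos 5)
Your proposal is correct and follows essentially the same route as the paper: decompose the queries by which subroutine issues them, charge the $\mathcal{I}_1$ cost directly to its $g(n,k)$ guarantee, bound the number of dispatches to $\mathcal{I}_2$ by $\min(\mO(k), \mO(g(n,k)))$ via Property~2 of Definition~\ref{def:tau:threshold}, and then use independence of the two instances together with the oblivious-adversary guarantee to decouple the expectation and multiply by $g(n,k)$. Your bookkeeping is in fact slightly more careful than the paper's (you keep the additive~$+1$ from the freshly inserted or deleted element and explicitly account for the lone $\DeleteF_{\mathcal{I}_2}(v)$ and the three $f$-evaluations in the final $\argmax$, which the paper silently absorbs), and you treat both the per-update and the amortized statements, whereas the paper's appendix proof spells out only the amortized case.
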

\begin{proof}
Consider the case where the expected number of oracle calls made by the thresholding algorithm \dynamicthreshold per each update is $\mO(g(n,k))$. Per each update, our algorithm makes an update in instance $\mI_1$ causing $\mO(g(n,k))$ oracle queries. Next, we make either $\mO(k)$ or $0$ oracle queries for the \subsetselect{} subroutine, depending on the used method. We also make a series of updates in instance $\mI_2$, each causing $\mO(g(n,k))$ oracle queries. The number of such updates is bounded by the number of changes in the output of instance $\mI_1$, which is bounded by both $k$ and $\mO(g(n,k))$ (according to the second property of Definition \ref{def:tau:threshold}). These comprise all the oracle queries made by our algorithm at time $t$. Therefore, the given bounds for this case hold. A detailed proof for the remaining bounds is provided in the appendix. 
\end{proof}

\section{Empirical results}
\label{section:empirical_results}
In this section, we empirically study our $(8+\eps)$-approximation dynamic algorithm.
We implement our codes in C++ and run them on a MacBook laptop with $8$ GB RAM and $M1$ processor.
We empirically study the performance of our algorithm for video summarization and the Max-Cut problem.  

\paragraph{Video summarization.}
Here, we use the Determinantal Point Process (DPP) which is introduced by \cite{macchi_1975}, and combine it with our algorithm to capture a video summarization. We run our experiments on YouTube and Open Video Project (OVP) datasets from \cite{DBLP:journals/prl/AvilaLLA11}.

For each video, we use the linear method of \cite{DBLP:conf/nips/GongCGS14} to extract a subset of frames and find a positive semi-definite kernel $L$ with size $n \times n$ where $n$ is the number of extracted frames. Then, we try to find a subset $S$ of frames such that it maximizes $\frac{det(L_S)}{det(I+L)}$ where $L_S$ is the sub-matrix of $L$ restricted to indices corresponding to frames $S$.
Since $L$ is a positive semi-definite matrix, we have $det(L_S) \ge 0$.
Interestingly, \cite{DBLP:journals/ftml/KuleszaT12} showed that $\log(det(L_S))$ is a non-monotone function.
We use these properties and set $f(S):= \log(det(L_S) + 1)$ to make $f$ a non-monotone non-negative submodular function. 
Then we run our $(8+\eps)$-approximate dynamic algorithm to find the best $S$ to maximize $f(S)$ such that $|S|\le k$ for $k \in [10]$.

\begin{figure}[h]
\begin{center}
\includegraphics[scale=0.2]{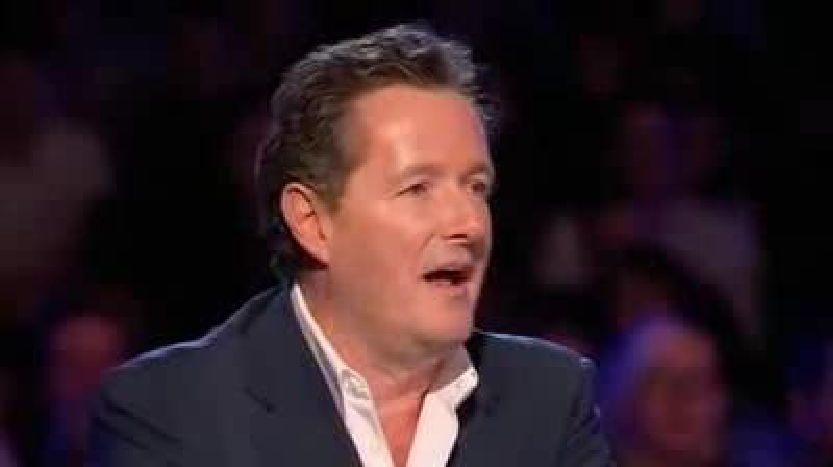}
\includegraphics[scale=0.2]{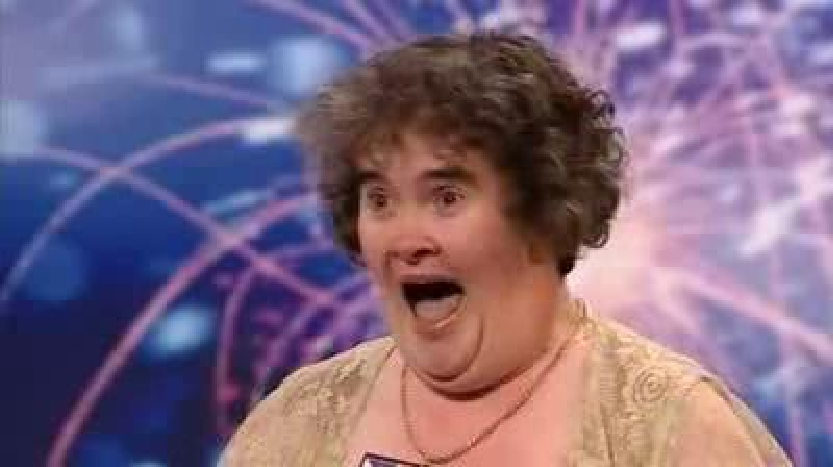}
\includegraphics[scale=0.2]{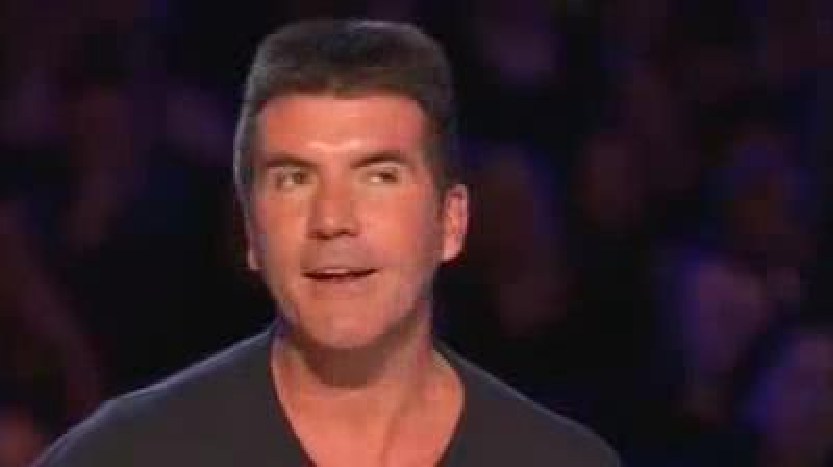}
\includegraphics[scale=0.2]{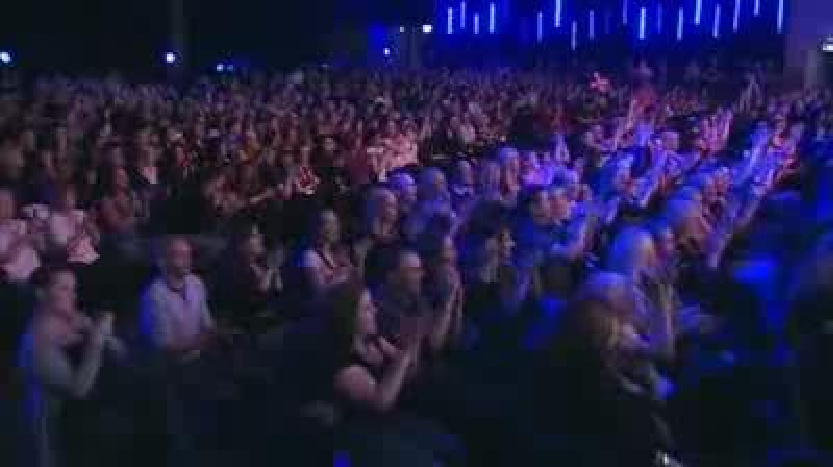}
\end{center}
\vspace{-0.1cm}
\caption{
Video summarization of Susan Boyle's performance on Britain's Got Talent show (video 106) from YouTube.}
\label{fig:youtube_images}
\end{figure}

\begin{figure}[h]
\begin{center}
\includegraphics[scale=0.2]{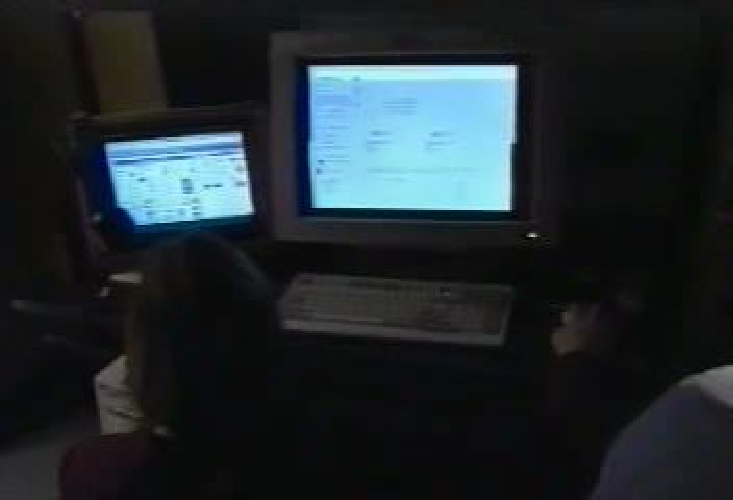}
\includegraphics[scale=0.2]{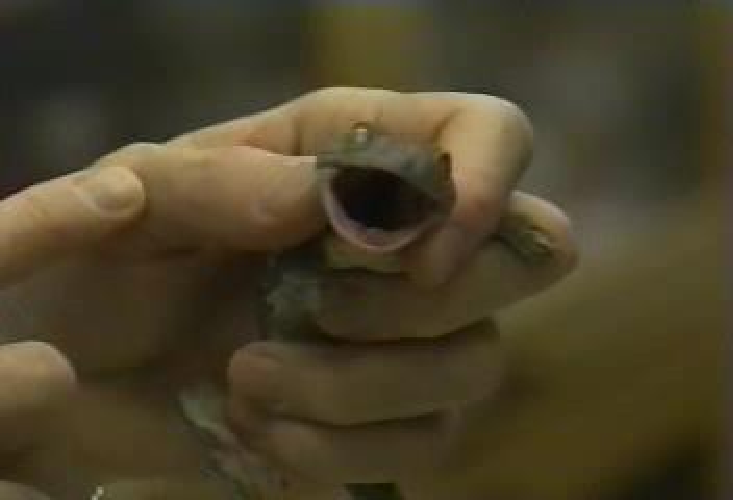}
\includegraphics[scale=0.2]{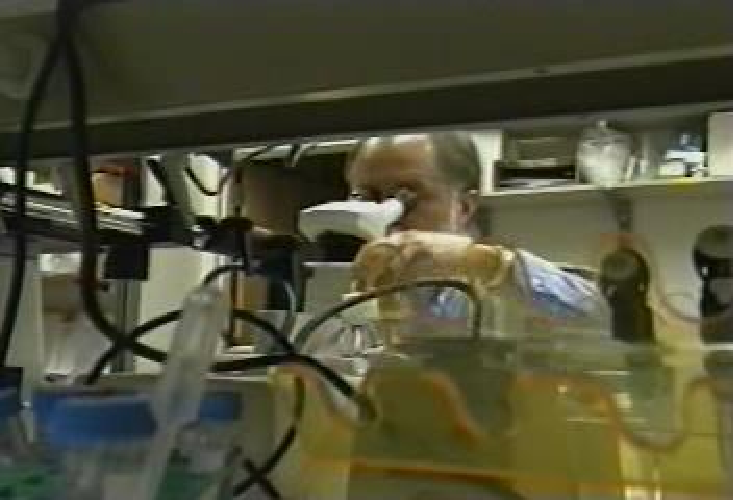}
\includegraphics[scale=0.2]{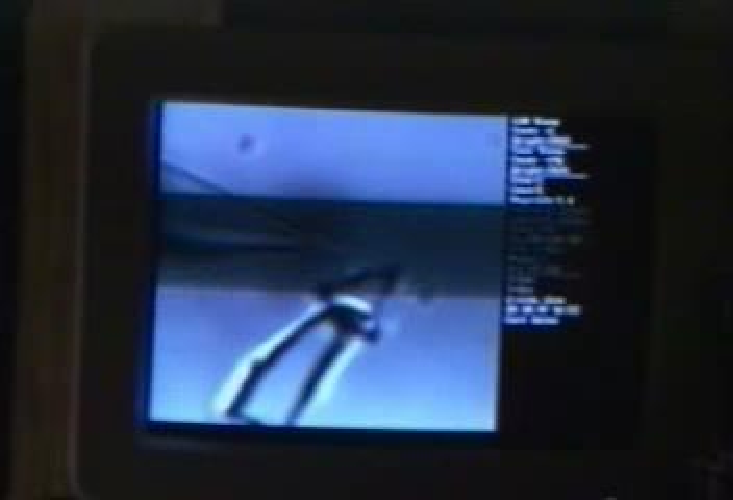}
\end{center}
\vspace{-0.1cm}
\caption{Video summarization for "Senses And Sensitivity, Introduct. to Lecture 1 presenter" (video 36) from OVP.}
\label{fig:OVP_images}
\end{figure}

First, we insert all frames to observe the quality of our algorithm. Figure \ref{fig:youtube_images} and \ref{fig:OVP_images} 
are the selected frames by our algorithm for Video 106 from YouTube and Video 36 from OVP, respectively, 
when we limit the number of selected frames to $4$. 
Then, we create a sequence $\Xi$ of updates of 
frames of each video. Similar to \cite{DBLP:conf/nips/LattanziMNTZ20}, 
we define the sequence as a sliding window model. 
That is, given a window of size $W$ for a parameter $W \in \mathbb{N}$, 
a frame is inserted at a time $t$ and will be alive for a window of size $W$ 
and then we delete that frame. 

To evaluate the performance of our algorithm, 
we benchmark (See Figure~\ref{fig:video_experiment_benchmark}) the total number of query calls and the submodular value of set $S$ 
of our algorithm and the streaming algorithm proposed for non-monotone submodular maximization so-called \textsc{Sample-Streaming} 
proposed in~\cite{DBLP:conf/nips/FeldmanK018}. 
This algorithm works as follows: 
Upon arrival of an element $u$, with probability $(1-q)$, for a parameter $0 < q < 1$, 
we ignore $u$, otherwise (i.e., with probability $q$), we do the following. 
If the size of set $S$ that we maintain is less than  $k$, i.e, 
$|S|<k$ and $\Delta(u|S) > 0$, we add $u$ to $S$.  
However, if $|S| = k$, we select an element $v \in S$ for which $\Delta(v:S)$ is minimum possible, 
where $\Delta(u:S)$ equals to $\Delta(u | S_u)$ where $S_u$ are elements that arrived before $u$ 
in sequence $\Xi$. If $\Delta(u|S) \ge (1+c) \Delta(v:S)$ for a constant $c$, we replace $v$ by $u$; 
otherwise, we do nothing. 
Now we convert this streaming algorithm into a dynamic algorithm. 
To accomplish this, we restart \textsc{Sample-Streaming} after every deletion that deletes an element of 
solution set $S$ that is reported by \textsc{Sample-Streaming}'s outputs. That is, if a deletion does not 
touch any element in set $S$, we do nothing; otherwise we restart the streaming algorithm.

\begin{figure*}[h]
\centering
\begin{tabular}{@{}c@{}c@{}c@{}}
\begin{tikzpicture}[scale=0.76]
\pgfplotsset{width=4.8cm,compat=1.9}
\begin{axis}[
    xlabel={$k$},
    ylabel={oracle calls},
    ymin=0, ymax=1100,
     xtick={2, 4, 6, 8, 10},
    legend pos=north west,
    ymajorgrids=true,
    grid style=dashed,
    legend style={nodes={scale=0.5, transform shape}},
]

\addplot[
    color=red,
    mark=square,
    error bars/.cd,
    y dir=both,
    y explicit,
    error bar style={
    color=red}
    ]
    coordinates {
        (1, 331.20000) += (0, 106.80000) -= (0, 123.20000) (2, 373.60000) += (0, 108.40000) -= (0, 143.60000) (3, 346.80000) += (0, 142.20000) -= (0, 64.80000) (4, 371.00000) += (0, 77.00000) -= (0, 124.00000) (5, 329.80000) += (0, 94.20000) -= (0, 89.80000) (6, 309.50000) += (0, 50.50000) -= (0, 50.50000) (7, 368.80000) += (0, 80.20000) -= (0, 127.80000) (8, 352.70000) += (0, 54.30000) -= (0, 74.70000) (9, 367.60000) += (0, 112.40000) -= (0, 104.60000) (10, 385.60000) += (0, 169.40000) -= (0, 94.60000) 
    };
    \addlegendentry{\textsc{Sample-Streaming}}

\addplot[
    color=blue,
    mark=star,
    error bars/.cd,
    y dir=both,
    y explicit,
    error bar style={
    color=blue}
    ]
    coordinates {
        (1, 461.40000) += (0, 9.60000) -= (0, 15.40000) (2, 662.30000) += (0, 90.70000) -= (0, 99.30000) (3, 529.50000) += (0, 91.50000) -= (0, 64.50000) (4, 447.80000) += (0, 31.20000) -= (0, 54.80000) (5, 323.90000) += (0, 20.10000) -= (0, 24.90000) (6, 235.40000) += (0, 16.60000) -= (0, 12.40000) (7, 179.20000) += (0, 9.80000) -= (0, 16.20000) (8, 178.50000) += (0, 18.50000) -= (0, 10.50000) (9, 189.80000) += (0, 16.20000) -= (0, 16.80000) (10, 238.30000) += (0, 14.70000) -= (0, 18.30000)
    };
    \addlegendentry{\textsc{Our Dynamic Algorithm}}
    
\end{axis}
\end{tikzpicture}
%


%




\begin{tikzpicture}[scale=0.76]
\pgfplotsset{width=4.8cm,compat=1.9}
\begin{axis}[
    xlabel={$k$},
    ylabel={oracle calls},
    ymin=0, ymax=1300,
     xtick={2, 4, 6, 8, 10},
    legend pos=north west,
    ymajorgrids=true,
    grid style=dashed,
    legend style={nodes={scale=0.5, transform shape}},
]

\addplot[
    color=red,
    mark=square,
    error bars/.cd,
    y dir=both,
    y explicit,
    error bar style={
    color=red}
    ]
    coordinates {
        (1, 322.70000) += (0, 146.30000) -= (0, 146.70000) (2, 484.00000) += (0, 183.00000) -= (0, 309.00000) (3, 553.50000) += (0, 217.50000) -= (0, 183.50000) (4, 687.60000) += (0, 171.40000) -= (0, 182.60000) (5, 603.70000) += (0, 118.30000) -= (0, 142.70000) (6, 625.30000) += (0, 94.70000) -= (0, 112.30000) (7, 681.60000) += (0, 220.40000) -= (0, 104.60000) (8, 615.80000) += (0, 133.20000) -= (0, 196.80000) (9, 717.10000) += (0, 173.90000) -= (0, 180.10000) (10, 694.10000) += (0, 170.90000) -= (0, 180.10000) 
    };
    \addlegendentry{\textsc{Sample-Streaming}}

\addplot[
    color=blue,
    mark=star,
    error bars/.cd,
    y dir=both,
    y explicit,
    error bar style={
    color=blue}
    ]
    coordinates {
        (1, 445.80000) += (0, 14.20000) -= (0, 15.80000) (2, 654.50000) += (0, 86.50000) -= (0, 56.50000) (3, 580.50000) += (0, 41.50000) -= (0, 51.50000) (4, 574.30000) += (0, 60.70000) -= (0, 51.30000) (5, 675.40000) += (0, 156.60000) -= (0, 88.40000) (6, 644.90000) += (0, 89.10000) -= (0, 79.90000) (7, 581.80000) += (0, 52.20000) -= (0, 39.80000) (8, 520.90000) += (0, 21.10000) -= (0, 47.90000) (9, 583.10000) += (0, 53.90000) -= (0, 55.10000) (10, 706.50000) += (0, 173.50000) -= (0, 107.50000) 
    };
    \addlegendentry{\textsc{Our Dynamic Algorithm}}
    
\end{axis}
\end{tikzpicture}

\begin{tikzpicture}[scale=0.76]
\pgfplotsset{width=4.8cm,compat=1.9}
\begin{axis}[
    xlabel={$k$},
    ylabel={f},
    ymin=0, ymax=6,
    xtick={2, 4, 6, 8, 10},
    legend pos=north west,
    ymajorgrids=true,
    grid style=dashed,
    legend style={nodes={scale=0.5, transform shape}},
]

\addplot[
    color=red,
    mark=square,
    error bars/.cd,
    y dir=both,
    y explicit,
    error bar style={
    color=red}
    ]
    coordinates {
        (1, 1.62818) += (0, 0.44989) -= (0, 0.57743) (2, 2.14678) += (0, 0.65564) -= (0, 0.41891) (3, 2.48939) += (0, 0.33780) -= (0, 0.40127) (4, 2.39747) += (0, 0.34600) -= (0, 0.64868) (5, 2.28432) += (0, 0.52396) -= (0, 0.47845) (6, 2.30636) += (0, 0.45690) -= (0, 0.35722) (7, 2.43273) += (0, 0.48661) -= (0, 0.59661) (8, 2.39775) += (0, 0.46000) -= (0, 0.46214) (9, 2.28745) += (0, 0.42656) -= (0, 0.32351) (10, 2.39145) += (0, 0.50267) -= (0, 0.74555) 
    };
    \addlegendentry{\textsc{Sample-Streaming}}

\addplot[
    color=blue,
    mark=star,
    error bars/.cd,
    y dir=both,
    y explicit,
    error bar style={
    color=blue}
    ]
    coordinates {
        (1, 2.46816) += (0, 0.07933) -= (0, 0.42347) (2, 2.15307) += (0, 0.32738) -= (0, 0.17542) (3, 2.29888) += (0, 0.09777) -= (0, 0.19832) (4, 2.19944) += (0, 0.43743) -= (0, 0.58491) (5, 2.39385) += (0, 0.40503) -= (0, 0.38268) (6, 2.36480) += (0, 0.17151) -= (0, 0.35363) (7, 2.47486) += (0, 0.06145) -= (0, 0.17877) (8, 2.49553) += (0, 0.03520) -= (0, 0.02626) (9, 2.46536) += (0, 0.07095) -= (0, 0.25866) (10, 2.26592) += (0, 0.20335) -= (0, 0.26033)
    };
    \addlegendentry{\textsc{Our Dynamic Algorithm}}

    
\end{axis}
\end{tikzpicture}


\begin{tikzpicture}[scale=0.76]
\pgfplotsset{width=4.8cm,compat=1.9}
\begin{axis}[
    xlabel={$k$},
    ylabel={f},
    ymin=0, ymax=10,
    xtick={2, 4, 6, 8, 10},
    legend pos=north west,
    ymajorgrids=true,
    grid style=dashed,
    legend style={nodes={scale=0.5, transform shape}},
]

\addplot[
    color=red,
    mark=square,
    error bars/.cd,
    y dir=both,
    y explicit,
    error bar style={
    color=red}
    ]
    coordinates {
        (1, 2.20679) += (0, 0.38838) -= (0, 0.38206) (2, 3.30840) += (0, 0.44028) -= (0, 0.39134) (3, 3.51886) += (0, 0.73748) -= (0, 0.68497) (4, 3.22474) += (0, 0.36300) -= (0, 0.61065) (5, 2.83012) += (0, 0.86126) -= (0, 1.18679) (6, 2.91402) += (0, 0.50340) -= (0, 1.78629) (7, 2.96153) += (0, 0.91239) -= (0, 1.41191) (8, 3.19881) += (0, 0.46492) -= (0, 0.93366) (9, 3.18183) += (0, 0.96342) -= (0, 1.56121) (10, 3.23328) += (0, 0.66669) -= (0, 0.46605)
    };
    \addlegendentry{\textsc{Sample-Streaming}}

\addplot[
    color=blue,
    mark=star,
    error bars/.cd,
    y dir=both,
    y explicit,
    error bar style={
    color=blue}
    ]
    coordinates {
        (1, 3.34082) += (0, 0.00204) -= (0, 0.00613) (2, 3.62122) += (0, 0.31347) -= (0, 0.27428) (3, 4.40612) += (0, 0.52449) -= (0, 0.53673) (4, 3.27633) += (0, 0.29102) -= (0, 0.32123) (5, 3.82531) += (0, 0.28081) -= (0, 0.55184) (6, 4.18082) += (0, 0.52530) -= (0, 0.44204) (7, 4.47143) += (0, 0.29184) -= (0, 0.36531) (8, 4.73959) += (0, 0.41143) -= (0, 0.69061) (9, 4.56939) += (0, 0.92449) -= (0, 0.63878) (10, 4.52245) += (0, 0.91020) -= (0, 0.66531)
    };
    \addlegendentry{\textsc{Our Dynamic Algorithm}}

    
\end{axis}
\end{tikzpicture}
\\
\small \hspace{11mm} (a) \hspace{32mm} (b) \hspace{27mm} (c) \hspace{29mm} (d)


\end{tabular}


\caption{We plot the total number of query calls and the average output of our dynamic algorithm and \textsc{Sample-Streaming} 
on video 106 from YouTube and video 36 from OVP. 
In this figure, from left to right, 
Sub-figures (a) and (b) are the total oracle calls for video 106 and 36, respectively.  
Similarly, Sub-figures (c) and (d) are average submodular value for video 106 and 36, respectively.  
}
\label{fig:video_experiment_benchmark}
\end{figure*}
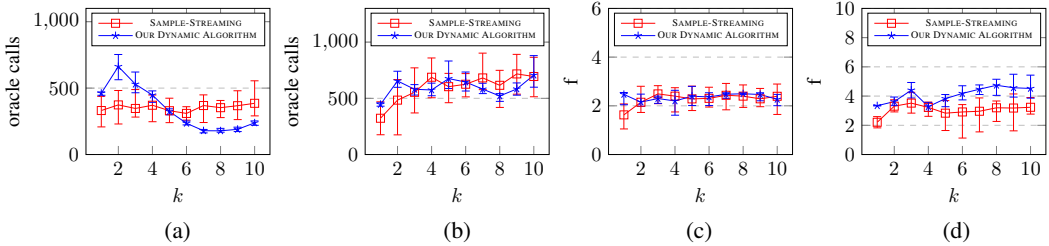

We run our algorithm for $\eps = k/2$ and compare the total oracle calls and average output of our algorithm and \textsc{Sample-Streaming} in Figure \ref{fig:video_experiment_benchmark}.
To prove the approximation guarantee of our dynamic algorithm, we assumed $\eps \le 1$.
However, in practice, it is possible to increase $\eps$ up to a certain level without affecting the output of the algorithm significantly. 
On the other hand, increasing $\eps$ reduces the total oracle calls and makes the algorithm faster.
As you can see in Figure \ref{fig:video_experiment_benchmark} plots (b) and (d), the submodular value of our algorithm is not worse than the \textsc{Sample-Streaming} algorithm whose approximation factor is $3+2\sqrt{2} \approx 5.828$ which is better than our approximation factor.
Thus, our algorithm has an outcome better than our expectation, while its total oracle calls are better than \textsc{Sample-Streaming} algorithm (look at Figure \ref{fig:video_experiment_benchmark} plots (a) and (c)). 

We also empirically study the celebrated Max-Cut problem which is a non-monotone submodular maximization function (See~\cite{DBLP:journals/siamcomp/FeigeMV11}). 
These experiments are given in the appendix.

\section{Conclusion}
In this paper, we studied non-monotone submodular maximization subject to cardinality constraint $k$ in the dynamic setting by providing a reduction from this problem to maximizing monotone submodular functions under the cardinality constraint $k$ with a certain kind of algorithms($\tau$-thresholding algorithms). 
Moreover, we used our reduction to develop the first dynamic algorithms for this problem. 
In particular, both our algorithms maintain a solution set whose submodular value is a $(8+\eps)$-approximation of the optimal value and require $\mO(\eps^{-3}k^3\log^3(n)\log(k))$ and $\mO(\eps^{-1}k^2\log^3(k))$ oracle queries per update, respectively. 

\section{Acknowledgements}

This work is partially supported by DARPA QuICC NSF AF:Small \#2218678, and NSF AF:Small \#2114269.

\bibliographystyle{plain}
\bibliography{references}

\newpage 
\appendix

\section{Guessing optimal value}
The dynamic algorithm that we proposed in the main part of the paper, 
assumes the value of $OPT$ is given as a parameter.
However, we know that the value of $OPT^*$ is not known; it changes after every insertion or deletion, and our final goal in this algorithm is to find a correct approximation of it. 
As discussed earlier, to achieve this goal, we run parallel instances of 
our dynamic algorithm for different guesses of the optimal value. We use  $(1+\eps')^i$, where $i\in \mathbb{Z}$ as our guesses for the optimal value. We showed that at any fixed time, the run in our algorithm whose corresponding $OPT$ $(1 + \eps')$-approximates $OPT^*$ also finds a solution with a guaranteed approximation factor of $(8+\eps)$ or $(10 + \eps)$, depending on the value of $\alpha$. 
However, If $\rho$ is the ratio between the maximum and minimum non-zero possible value of the optimal solution at any time $t$, then the number of parallel instances of our algorithm will be 
$\mO(\log_{(1+\eps')}\rho) = \mO(\eps^{-1}\cdot \log\rho)$. 
This incurs an extra $\mO(\eps^{-1}\cdot \log\rho)$-factor in the query complexity of our dynamic algorithm. 

Next, we show how to replace this extra factor with 
an extra factor of $\mO(\eps^{-1}\cdot \log k)$, which is independent of $\rho$. 
We use the well-known technique that has been also used in \cite{DBLP:conf/nips/LattanziMNTZ20}. 
In particular, for every element $v$, we only add it to those instances $i$ 
for which we have $\frac{\eps'}{k}\cdot{(1+\eps')^i} \leq f(v) \leq (1+\eps')^i = OPT$.
Note that for run $i$ corresponding to the guess of optimal value $OPT = (1+\eps')^i$,
$f(v) > (1+\eps')^i$ means that $f(v)$ is greater than $OPT$. Therefore, we can safely ignore this element for this run. In fact, this run might be useful for finding the solution only if $OPT$ is a $(1+\eps')$ approximation of $OPT^*$, which can only happen after the deletion of $v$, so ignoring $v$ has no effect on the solution and saves query complexity. 
 
Moreover, ignoring all elements $v$ whose $f(v) < \frac{\eps'}{k}\cdot{(1+\eps')^i}$ in the run $i$ corresponding to the guess of optimal value $OPT = (1+\eps')^i$, decreases the submodular value of its solution by at most $OPT\cdot \eps'$. This changes the approximation guarantee by $\mO(\eps')$, which can be ignored.

In this way, every element $v$ is added to at most  $\mO(\eps^{-1}\log{k})$ parallel instances. 
Thus, after every insertion or deletion, 
we need to update only $\mO(\eps^{-1}\log{k})$ instances of our dynamic algorithm.
\section{Proof of Lemma \ref{lm:anupam_threshold}}

\newtheorem*{recall:lm:anupam_threshold}{Lemma~\ref{lm:anupam_threshold}}
\begin{recall:lm:anupam_threshold}
Suppose set $S$ satisfies Property 2 of Definition \ref{def:tau:threshold}. 
In other words, $S$ has less than $k$ elements and for any $v\in V\setminus S$, the marginal gain $\Delta(v|S) < \tau$.
Then, for any arbitrary subset $C \subseteq V$, we have $f(S) \ge f(S \cup C) - |C| \cdot \tau$.
\end{recall:lm:anupam_threshold}

\begin{proof}
For the sake of contradiction, assume that $f(S) < f(S \cup C) - |C| \cdot \tau$. 
Rewriting this bound in terms of the marginal gain of set $C$ with respect to set $S$ gives:   
$|C| \cdot \tau < f(S \cup C) - f(S) = \Delta(C|S) \le \sum_{v \in C} \Delta(v|S)$, where the second inequality holds because $f$ is a submodular function.
This implies that $\tau < \dfrac{\sum_{v \in C} \Delta(v|S)}{|C|}$, which means that there must be an element $v^* \in C$ whose marginal gain is more than $\tau$. 
Observe that $\Delta(v^*|S) > \tau > 0$ and then $v^* \notin {S}$.
That is, $\Delta(v^*|S) > \tau$ for an element $v^*\in V\setminus S$, 
which contradicts the assumption that we made. Thus, the claim must be true.
\end{proof}

\section{Proof of Lemma~\ref{lem:time}}
\newtheorem*{recall:lem:time}{Lemma~\ref{lem:time}}

\begin{recall:lem:time}
     Let random variable $Q_t$ denote the number of oracle calls that the algorithm \blue{in Theorem~\ref{thm:meta}} makes after the $t$-th update.
    \blue{Then if we choose local search method as \subsetselect{} subroutine we have}
    \begin{align}
        \Ex{\sum_{t=1}^T
        Q_t} \le T (\min(k \cdot g(n, k), g(n, k)^2) 
        + \mO(k)) \enspace,
        \label{eq:goal_8p}
    \end{align}
     \blue{and if we choose random sampling method as \subsetselect{} subroutine we have}
    \begin{align}
        \Ex{\sum_{t=1}^T
        Q_t} \le T \cdot \min(k \cdot g(n, k), g(n, k)^2) \enspace ,
        \label{eq:goal_10p}
    \end{align}
    \blue{where $g(n,k)$ is the number of oracle calls that thresholding algorithm \dynamicthreshold  makes after each update.}
\end{recall:lem:time}

\begin{proof}
    
    We first distinguish the queries depending on the operation that makes the queries.
    We denote the queries made by an update in instance $\mI_1$ as \emph{type-1 queries}, 
    queries made by an update in instance $\mI_2$ as \emph{type-2 queries}, and
    the queries made by \subsetselect{} as \emph{type-3 queries}. 
    We will use $Q_{t}^{(1)}, Q_{t}^{(2)}$ and $Q_{t}^{(3)}$ 
    to denote the number of type-1, type-2 and type-3 queries, the algorithms make 
    for the $t$-th operation. As all of the queries made by the algorithm fall into one of the type-1, type-2, and type-3 categories,
    $Q_t = Q_t^{(1)} + Q_t^{(2)}+ Q_t^{(3)}$. Thus, it 
    suffices to bound $\Ex{\sum_{t=1}^{T} Q_t^{(\ell)}}$ seperately for each $\ell \in \{1, 2, 3\}$.
    
    \textbf{Type-1 queries.}
    We start with $\Ex{\sum_{t=1}^{T} Q_t^{(1)}}$.
    By assumption, the expected amortized number of query calls made by $\mI_1$ is at most $g(n,k)$.
    Thus, $        \Ex{\sum_{t=1}^{T} Q_t^{(1)} }\le T \cdot g(n, k)$.
    
    \textbf{Type-2 queries.}
    We now consider $\sum_{t=1}^{T} Q_t^{(2)}$.
    For each $t$, let random variable $X_t$ denote
    $|S_1 \backslash Z| + |Z \backslash S_1|$ during the
    $t$-th call to \UpdateF{}
    and let
    $v_{t}^{(2), 1}, \dots, v_{t}^{(2), X_t}$ denote
    the elements inserted to or deleted from $\mI_{2}$ during this call.
    In other words, $        \{v_{t}^{(2), 1}, \dots, v_{t}^{(2), X_t}\}
        = 
        \left( S_1 \backslash Z\right)
        \cup
        \left(
         Z \backslash S_1
        \right).$

    For each $1\le i \le X_t$, let $Q_{t}^{(2), i}$ denote the number of oracle calls $\mI_{2}$ makes because of the insertion or deletion of $v_{t}^{(2), i}$. It is clear that
    $Q_t^{(2)} = \sum_{i=1}^{X_t} Q_t^{(2), i}$. 
    By assumption,
    the amortized expected query complexity of $\mI_{2}$ is at most $g(n, k)$. Conditioning on $X_{1}, \dots, X_{T}$, we obtain
    \begin{align*}
        &\Ex{\sum_{t=1}^{T}Q_t^{(2)} | X_{1}, \dots X_{T} }
        = \Ex{(\sum_{t=1}^{T}\sum_{i=1}^{X_t} Q_{t}^{(2), i}) | X_{1}, \dots X_{T}}\\
        & = \sum_{t=1}^{T}\sum_{i=1}^{X_t} \Ex{Q_{t}^{(2), i} | X_{1}, \dots X_{T}}
        \overset{(a)}{\le} \sum_{t=1}^{T}\sum_{i=1}^{X_t} g(n, k) 
        =  (\sum_{t=1}^{T}X_t) \cdot g(n, k) \enspace ,
    \end{align*}
    where for $(a)$, we have used the guarantee on the amortized query complexity of $\mI_2$. 
    Note that as the guarantee holds in the oblivious adversarial model, it holds even after conditioning on $X_{1}, \dots, X_{T}$.
    Taking expectation over $X_{1}, \dots, X_{T}$, we obtain 
    $        \Ex{\sum_{t=1}^{T}Q_t^{(2)}}
        =  g(n, k) \cdot \Ex{\sum_{t=1}^T X_t}$. 
    It remains to bound $\Ex{\sum_{t=1}^{T} X_t}$. As
    $X_t = |S_1 \backslash Z| + |Z \backslash S_1|$, 
    we can always guarantee
    $X_t \le 2k$, leading to the bound 
    $        \Ex{\sum_{t=1}^{T} X_t} \le 2k\cdot T$. 
    In addition, by Property~2 in Definition \ref{def:tau:threshold}, 
    we can guarantee $X_t \le Q_t$, which leads to the bound 
    $\Ex{ \sum_{t=1}^{T} X_t } \le \Ex{\sum_{t=1}^T Q_t^{(1)}} \le T \cdot g(n, k)$.
    Thus, we obtain $        \Ex{\sum_{t=1}^{T}Q_t^{(2)}}
        \le  T \cdot\min(k\cdot g(n, k),  g(n, k)^2)$.
    
    \textbf{Type-3 queries.}
    Finally, we bound 
    $\Ex{\sum_{t=1}^{T}Q_t^{(3)}}$. 
    For the $(8+\eps)$-approximation algorithm, the number of type-3 queries is the 
    number of queries that \subsetselect$(S_1)$ makes, which is $\mO(|S_1|)$. 
    Since, the size of $S_1$ is always bounded by $k$, we then have  
    $\Ex{Q_{t}^{(3)}} \le \mO(|S_1|)= \mO(k)$, 
    which further implies
    $\Ex{\sum_{t=1}^{T}Q_{t}^{(3)}} \le T \cdot \mO(k)$.
       For the $(10+\eps)$-approximation algorithm, since \uniformsubset{} make no queries, the number of type-3 queries is always zero, leading to the bound
    $\Ex{\sum_{t=1}^{T}Q_t^{(3)}} = 0$. 
    
    Combining the above the bounds for type-1, type-2,  and type-3 queries, 
    we obtain the bound 
    \eqref{eq:goal_8p} for the
    $(8+\eps)$-approximation algorithm and the bound 
    \eqref{eq:goal_10p} for the 
    $(10+\eps)$-approximation algorithm. 
\end{proof}

\section{$\tau$-thresholding algorithm}

In this section, we show that the algorithms provided in ~\cite{DBLP:conf/nips/Monemizadeh20} and \cite{bani2023dynamicmat}
are indeed $\tau$-thresholding for any given $\tau$, and can be used in our Metatheorem ~\ref{thm:meta}.

The following is the modified version of the monotone submodular maximization algorithm provided by \cite{bani2023dynamicmat}, ready to be used in our reduction. 

\newcommand{\update}{\textsc{Update}}
\newcommand{\init}{\textsc{Init}}
\newcommand{\carupdates}{\textsc{MonotoneCardinalityConstraintSubroutines}}
\newcommand{\levelingconstraint}{\textsc{MonotoneCardinalityConstraintLeveling}}
\newcommand{\binarysearch}{\textsc{BinarySearch}}
\newcommand{\replacementTester}{\textsc{Promote}}
\newcommand{\marginalgain}[2]{f(#2 + #1) - f(#2)}

\begin{algorithm}[H]
  \caption{\levelingconstraint$(k, \tau)$}
  \label{alg:cardinality:offline_constraint}
  \begin{algorithmic}[1]
    \Function{\init}{$V$}
        \State $I_{0} \gets \emptyset$, \quad $ R_{0} \gets \emptyset$, \quad $T \gets 0$
    \EndFunction
    
\rule{15cm}{0.4pt} 
    \Function{\constLevel}{$i$}
        \State Let $P$ be a random permutation of elements of $R_{i}$ and $\ell \gets i$ 
        \For{$e$ in $P$}\label{line:cardinality:iterate_P}
            \If{ \replacementTester$(I_{\ell-1}, e) = True$} \label{line:cardinality:check_e_is_promoting}
                \State $e_{\ell} \gets e$, \quad  $I_{\ell} \gets I_{\ell-1} + e$, \quad $z \gets \ell$, \quad $R_{\ell+1} \gets  \emptyset$, \quad  $\ell \gets \ell + 1$
                \label{line:cardinality:set_I_l}
            \Else
                \State Run binary search to find the lowest $z \in [i, \ell-1]$ such that \replacementTester$(I_z, e) = False$ 
            \EndIf
            \For{$r \gets i+1$ \textbf{to} $z$}  
                \State $R_r \gets R_r + e$. \label{line:cardinality:constlevelmatroid:addR_bs}
            \EndFor
        \EndFor
        \State $T \gets \ell-1$, which is the final value of $\ell$ in the for-loop above subtracted by one
    \EndFunction
    
\rule{15cm}{0.4pt} 
    \Function{\replacementTester}{$I, e$}
          \If{$\marginalgain{e}{I} \ge \tau$  and  $|I| < k$}
             \State \Return True
          \EndIf
          \State \Return False
    \EndFunction
    \end{algorithmic}
\end{algorithm}


\begin{algorithm}
    \caption{\carupdates$(k, \tau)$ }
    \begin{algorithmic}[1]
        \Function{\insertv}{$v$}
        \State $R_0 \gets R_0 + v$. 
        \For{$i \gets 1$ \textbf{to} $T+1$}
            \If{\replacementTester$(I_{i-1}, v)$ = False} \label{line:cardinality:insert:break}
                \State \Break
            \EndIf
            \State $R_{i} \gets R_{i} + v$.
            \State Let $p=1$ with probability $\frac{1}{|R_i|}$, and otherwise $p=0$. \label{line:cardinality:p:insert:klogk}
            \If{$p=1$} \label{line:cardinality:insert:if}
                \State {$e_i \gets v$, \quad $I_i \gets I_{i-1} + v$} \label{line:cardinality:insert:setI}
                \State {$R_{i+1} = \{e' \in R_i: \replacementTester{}(I_i, e') = True \}$} \label{line:cardinality:insert:setRi+1}
                \State {$\constLevel{}(i+1)$
                }
                \State \Break
            \EndIf
        \EndFor
    \EndFunction
    
\rule{15cm}{0.4pt} 

    \Function{\deletev}{$v$}
        \State $R_0 \gets R_0 - v$
        \For{ $i \gets 1$ \textbf{to} $T$}
            \If{$v \notin R_i$}
                \State \Break
            \EndIf
            \State $R_i \gets R_i - v$
            \If{$e_i = v$}
            \State $\constLevel(i)$.
            \State \Break
            \EndIf
        \EndFor
    \EndFunction
    
\rule{15cm}{0.4pt} 
    \Function{\textsc{Extract}}{}
        \State \Return $I_{T}$
    \EndFunction
  \end{algorithmic}
\end{algorithm}

This algorithm maintains a leveled data structure consisting of elements
$e_1, \cdots, e_T$, sets $R_0 \supseteq R_1 \supset \cdots \supset R_T \supset R_{T+1}=\emptyset$, and $I_0, I_1, \cdots, I_T$, $I_i = I_{i - 1} + e_i$ for any $i \in [1, T]$, and $T \leq k$. 
For any $i \in [1, T]$, we have $\marginalgain{e_i}{I_{i-1}} \ge \tau$.
Additionally, if $e \notin I_T$, you can infer that either $T = k$ or $\marginalgain{e}{I_T} < \tau$, considering the submodularity of the function $f$ and the fact that $e$ has been filtered out of $R_{j}$ for some $j \in [1, T]$ due to its low marginal gain with respect to $I_j$ or the size of $I_j$. 
Combining the last two facts establishes Property 1 of Definition \ref{def:tau:threshold}.
Regarding Property 2, when the update is an insertion, the number of changes in the output is either $0$ or at most $|R_{i}|$ if $\constLevel{}(i+1)$ gets invoked. Meanwhile, the number of oracle queries is at least $1$ or $ i + |R_i|$ just by considering the oracle queries made before the invocation of $\constLevel{}$.  
Also, when the update is a deletion, the number of changes in the output is either $0$ or at most $|R_i|$ if $\constLevel{}(i)$ gets invoked. However, the number of oracle queries is either $0$ or at least $R_i$ even with discarding any oracle call made during the binary searches. 
Therefore, this algorithm also satisfies Property 2 of Definition \ref{def:tau:threshold}.
The query complexity of this algorithm is $\mO(klogk)$ per update. [See Lemma~91 in \cite{bani2023dynamicmat} where their parallel runs, which we do not use, are ignored.]

The algorithm of \cite{DBLP:conf/nips/Monemizadeh20} also has a similar construction. 
It constructs its final output $G$ in multiple \emph{levels}.
Initializing $G_0 := \emptyset, R_0 = V$ and $i=1$, it repeatedly performs the following actions.
\begin{enumerate}
    \item  Filtering Step: The algorithm shaves off, i.e., filters, those elements $e \in V$ whose marginal gain with respect to $G_i$ is less than $\tau$ and collects the remaining elements in $R_{i+1}$. In particular,
    $R_{i+1} = \{e \in R_i: \Delta(e | G_i) \ge \tau\}$.
    If there are no remaining elements, the algorithm stops and $G_{i}$ is returned as the final output.
    Otherwise,
    $i$ is increased by one and the algorithm proceeds to the next step
    \item 
    Greedy Step: The algorithm chooses a set $S_i$ taken from $R_i$ uniformly at random. It then creates $G_i$ by adding elements of $S_i$ to $G_{i-1}$ one by one, as long as 
    the marginal gain of the element with respect to $G_i$ is at least $\tau$. In particular, the algorithm initializes $G_i := G_{i-1}$ and then iterates through $S$, adding element $e$ to $G_i$ if $\Delta(e | G_i) \ge \tau$.
    ~\\
    If at any point the size of $G_i$ reaches $k$, the algorithm stops, and $G := G_i$ is returned as the final output.
\end{enumerate}
This approach forms a leveled construction characterized by the sets $\emptyset = G_0 \subset G_1 \subset \dots \subset G_{i^*}=G$ where $i^*$ denotes the final value of $i$ (which is $\mO(k\log(n))$, See Corollary 5 in their paper).
In order to handle updates,
whenever an element in $G$ is deleted, 
the construction is restarted from the level where the deletion occurred. As for insertions, whenever an element $e$ is inserted, the algorithm adds it to each level $i$ as long as it is not filtered beforehand, i.e., $\Delta(e|G_{i-1}) \ge \tau$. Once the element is added, the construction may be restarted with some probability.
We refer to \cite{DBLP:conf/nips/Monemizadeh20} for a more detailed analysis along with the pseudocode.
~\\
Given this description, if $|G| = k$, then $f(G) \ge k \cdot \tau$ as all elements added to $G$ increased $f(G)$ by at least $\tau$, which proves Property 1.a.
As for the case of $|G| < k$, for each $e \in V$, since $e$ was filtered out at some point, it satisfies $\Delta(e | G_i) \le \tau$ for some $i \le i^*$. Since $f$ is submodular and $G_i \subseteq G$, this proves Property 1.b. 
Indeed, these properties are formally proved in Lemma 3 of 
\cite{DBLP:conf/nips/Monemizadeh20}.
~\\
As for Property 2, whenever a level is restarted, the number of changed elements is at most the size of the level (i.e.,  $|R_i|$)
since in the worst case, all elements in $R_i$ are added to $G_i$. 
The number of queries the algorithm makes however is at least the size of the level as the filtering step performs $|R_i|$ queries.
The number of changed elements is therefore bounded by the number of queries as desired.
Finally, the guarantee on the expected amortized query complexity is provided in \cite{DBLP:conf/nips/Monemizadeh20} (Lemma~7 and Lemma~8).

\section{Empirical results}

\subsection{Max-Cut.}
In this section, we study the celebrated Max-Cut problem which is a non-monotone submodular maximization function (See~\cite{DBLP:journals/siamcomp/FeigeMV11}).  
In this problem, given a graph $G(V,E)$,  we would like to compute a set of vertices $S\subseteq V$ that maximizes 
the size of cut $C(S,V \setminus S)$. 

We denote the number of edges in cut $C(S, V \setminus S)$ by $f(S) = |C(S, V \setminus S)|$. 
For the input of our algorithm, we use real-world data sets 
that we collect from SNAP Data Collection~\cite{snapnets}:
loc-Gowalla and web-Stanford. 
Similar to the video summarization experiment, we create a sequence of updates  of 
vertices of graph $G(V,E)$ in a sliding window model. 
For comparing the performance of our algorithm, in addition to \textsc{Sample-Streaming}, we run \textsc{Random Algorithm} that selects $k$ random elements from the remaining elements after each update. 

Finally, we run our algorithm and \textsc{Sample-Streaming} algorithm ten times on different values of $k$. In plots (a) and (c), it is shown that our algorithm has fewer calls than \textsc{Sample-Streaming} when $k$ is small.
In plots (b) and (d), we show that in practice, 
our algorithm has a better approximation guarantee than what we proved theoretically.

As for the non-monotone submodular function, we study the celebrated \textbf{Max-Cut} problem. 
In this problem, given a graph $G(V,E)$,  we would like to compute a set of vertices $S\subseteq V$ that maximizes 
the size of cut $C(S,V \setminus S)$ which is the number of edges between $S$ and $V \setminus S$. 
We denote the number of edges in cut $C(S, V \setminus S)$ by $f(S) = |C(S, V \setminus S)|$.
This function is non-monotone submodular (see~\cite{DBLP:journals/siamcomp/FeigeMV11}). 
Our goal is, given a graph $G(V,E)$ and a parameter $k \in \mathbb{N}$, to compute a Max-Cut $C(S,V \setminus S)$
of $G$ for which the number of vertices in $S$ is at most $|S| \le k$.

For the input of our algorithm, we use real-world data sets 
that we choose from SNAP Data Collection~\cite{snapnets}:
(1) loc-Gowalla (Gowalla location based online social network) that has $196,591$ vertices and $950,327$ edges, and 
(2) web-Stanford (Web graph of Stanford.edu) that has $281,903$ vertices and $2,312,497$ edges. 

We create a sequence $\Xi$ of updates of 
vertices of graph $G(V, E)$. Similar to \cite{DBLP:conf/nips/LattanziMNTZ20}, 
we define the sequence as a sliding window model. 
That is, given a window size $W$, a vertex is inserted at a time $t$ and will be alive for a window of size $W$ 
and then we delete that vertex. 
To make the sequence an adversarial sequence, 
we insert vertices in descending order of their degrees, and each vertex remains alive for a window of $W$ vertex insertions and deletions, 
and later the vertex will be deleted from the graph. 

To evaluate the performance of our algorithm, 
we benchmark the total number of query calls and the submodular value of set $S$ 
of our algorithm and the streaming algorithm proposed for non-monotone submodular maximization so-called \textsc{Sample-Streaming} 
proposed in~\cite{DBLP:conf/nips/FeldmanK018}. 
This algorithm works as follows: 
Upon arrival of an element $u$, with probability $(1-q)$, for a parameter $0 < q < 1$, 
we ignore $u$, otherwise (i.e., with probability $q$), we do the following. 
If the size of set $S$ that we maintain is less than  $k$, i.e, 
$|S|<k$ and $\Delta(u|S) > 0$, we add $u$ to $S$.  
However, if $|S| = k$, we select an element $v \in S$ for which $\Delta(v:S)$ is minimum possible, 
where $\Delta(u:S)$ equals to $\Delta(u | S_u)$ where $S_u$ are elements that arrived before $u$ 
in the stream. If $\Delta(u|S) \ge (1+c) \Delta(v:S)$ for a constant $c$, we replace $v$ by $u$; 
otherwise, we do nothing. 
Now we convert this streaming algorithm into a dynamic algorithm. 
For that, we restart Sample-Streaming after every deletion that deletes a vertex of 
reported set $S$ by \textsc{Sample-Streaming}'s outputs. That is, if a deletion does not 
touch any vertex in set $S$, we do nothing; otherwise we restart the streaming algorithm.

Finally, we run our algorithm for $\eps=1$ and the \textsc{Sample-Streaming} algorithm ten times on different values of $k$ 
and benchmark them in Figure~\ref{fig:experiment}.
In plots (a) and (c), it is shown that our algorithm has fewer oracle calls than \textsc{Sample-Streaming} when $k$ is small.
The number of oracle calls of the \textsc{Sample-Streaming} algorithm is $\mO(n^2)$ in the worst case which is independent of $k$.
Therefore, by increasing $k$, 
the number of oracle calls of our algorithm becomes worse than the number of oracle calls of the Sample-Streaming algorithm. 
On the other hand, our algorithm works significantly better when the size of the input graph increases.
As is shown in Figure~\ref{fig:experiment}, in plots (a) and (c), 
running on a bigger graph increases the number of oracle calls of \textsc{Sample-Streaming} 
more than our algorithm.

\begin{figure}
\centering
\begin{tabular}{@{}c@{}}
\begin{tikzpicture}
\pgfplotsset{width=6cm,compat=1.9}
\begin{axis}[
    xlabel={$k$},
    ylabel={oracle calls},
    ymin=0, ymax=5300000000,
     xtick={10, 20, 30, 40, 50},
    legend pos=north west,
    ymajorgrids=true,
    grid style=dashed,
    legend style={nodes={scale=0.5, transform shape}},
]

\addplot[
    color=blue,
    mark=star,
    error bars/.cd,
    y dir=both,
    y explicit,
    error bar style={
    color=blue}
    ]
    coordinates {
        ( 5 , 26217507.700000 ) += (0, 1659405.300000) -= (0, 1471015.700000)
		( 10 , 145281586.200000 ) += (0, 8861579.800000) -= (0, 5122754.200000)
		( 15 , 359184099.300000 ) += (0, 8080608.700000) -= (0, 6672706.300000)
		( 20 , 736363845.800000 ) += (0, 28792657.200000) -= (0, 55878251.800000)
		( 25 , 1213196337.200000 ) += (0, 63481979.800000) -= (0, 55120812.200000)
		( 30 , 1814655289.600000 ) += (0, 67505990.400000) -= (0, 64998651.600000)
		( 35 , 2741748775.400000 ) += (0, 73827638.600000) -= (0, 64166745.400000)
		( 40 , 3751847649.100000 ) += (0, 131622916.900000) -= (0, 147327279.100000)
		( 45 , 4979103069.700000 ) += (0, 160551026.300000) -= (0, 127589011.700000)
    };
    \addlegendentry{Our Dynamic Algorithm,
    $\epsilon=1$}

\addplot[
    color=red,
    mark=square,
    error bars/.cd,
    y dir=both,
    y explicit,
    error bar style={
    color=red}
    ]
    coordinates {
        ( 5 , 1560997415.900000 ) += (0, 8648623.100000) -= (0, 7669959.900000)
		( 10 , 1567314201.300000 ) += (0, 7456103.700000) -= (0, 6349165.300000)
		( 15 , 1564426698.900000 ) += (0, 14870200.100000) -= (0, 9670586.900000)
		( 20 , 1566518139.700000 ) += (0, 12721529.300000) -= (0, 12004174.700000)
		( 25 , 1564419391.600000 ) += (0, 11066482.400000) -= (0, 9514318.600000)
		( 30 , 1564943395.400000 ) += (0, 8977314.600000) -= (0, 7967891.400000)
		( 35 , 1564653082.900000 ) += (0, 14254919.100000) -= (0, 7207053.900000)
		( 40 , 1564550144.900000 ) += (0, 11065720.100000) -= (0, 10902292.900000)
		( 45 , 1567986201.100000 ) += (0, 8378287.900000) -= (0, 6284425.100000) 
    };
    \addlegendentry{\textsc{Sample-Streaming}}

\addplot[
    color=yellow,
    mark=triangle,
    ]
    coordinates {
        (5, 393183) (10, 393183) (15, 393183) (20, 393183) (25, 393183) (30, 393183) (35, 393183) (40, 393183) (45, 393183) 
    };
    \addlegendentry{\textsc{Random Algorithm}}

\end{axis}
\end{tikzpicture}
\end{tabular}

\small (a) loc-Gowalla's total oracle calls

%
\begin{tabular}{@{}c@{}}
\begin{tikzpicture}
\pgfplotsset{width=6cm,compat=1.9}
\begin{axis}[
    xlabel={$k$},
    ylabel={f},
    ymin=0, ymax=100000,
    xtick={10, 20, 30, 40, 50},
    legend pos=north west,
    ymajorgrids=true,
    grid style=dashed,
    legend style={nodes={scale=0.5, transform shape}},
]


\addplot[
    color=blue,
    mark=star,
    error bars/.cd,
    y dir=both,
    y explicit,
    error bar style={
    color=blue}
    ]
    coordinates {
        ( 5 , 26978.040000 ) += (0, 2236.560000) -= (0, 2633.540000)
		( 10 , 35520.800000 ) += (0, 2949.300000) -= (0, 3336.400000)
		( 15 , 37055.500000 ) += (0, 5334.200000) -= (0, 2578.600000)
		( 20 , 41728.740000 ) += (0, 372.160000) -= (0, 42.040000)
		( 25 , 50624.030000 ) += (0, 0.870000) -= (0, 0.830000)
		( 30 , 54330.310000 ) += (0, 1.590000) -= (0, 1.510000)
		( 35 , 61221.630000 ) += (0, 1.570000) -= (0, 0.730000)
		( 40 , 63812.220000 ) += (0, 1105.980000) -= (0, 5189.720000)
		( 45 , 60465.320000 ) += (0, 5156.780000) -= (0, 4600.920000) 
    };
    \addlegendentry{Our Dynamic Algorithm,
    $\epsilon=1$}

\addplot[
    color=red,
    mark=square,
    error bars/.cd,
    y dir=both,
    y explicit,
    error bar style={
    color=red}
    ]
    coordinates {
        ( 5 , 13638.679000 ) += (0, 6572.721000) -= (0, 6392.689000)
		( 10 , 19149.889000 ) += (0, 12803.211000) -= (0, 9518.199000)
		( 15 , 24159.830000 ) += (0, 11187.970000) -= (0, 10216.130000)
		( 20 , 25992.370000 ) += (0, 14218.230000) -= (0, 8841.070000)
		( 25 , 24390.730000 ) += (0, 18899.370000) -= (0, 6779.430000)
		( 30 , 29855.560000 ) += (0, 10300.040000) -= (0, 10453.360000)
		( 35 , 32096.300000 ) += (0, 9838.600000) -= (0, 8658.100000)
		( 40 , 30935.600000 ) += (0, 6829.600000) -= (0, 8050.600000)
		( 45 , 33399.360000 ) += (0, 12104.240000) -= (0, 7742.260000) 
    };
    \addlegendentry{\textsc{Sample-Streaming}}

\addplot[
    color=yellow,
    mark=triangle,
    ]
    coordinates {
        (5, 163.129) (10, 323.8) (15, 484.162) (20, 644.309) (25, 805.141) (30, 964.978) (35, 1123.22) (40, 1285.19) (45, 1437.75) 
    };
    \addlegendentry{\textsc{Random Algorithm}}
    
\end{axis}
\end{tikzpicture}
\end{tabular}

\small (b) loc-Gowalla's average output

\begin{tabular}{@{}c@{}}
\begin{tikzpicture}
\pgfplotsset{width=6cm,compat=1.9}
\begin{axis}[
    xlabel={$k$},
    ylabel={oracle calls},
    ymin=0, ymax=7300000000,
     xtick={10, 20, 30, 40, 50},
    legend pos=north west,
    ymajorgrids=true,
    grid style=dashed,
    legend style={nodes={scale=0.5, transform shape}},
]
\addplot[
    color=green,
    mark=triangle,
    error bars/.cd,
    y dir=both,
    y explicit,
    error bar style={
    color=green}
    ]
    coordinates {
        ( 5 , 13189332.666667 ) += (0, 600024.333333) -= (0, 1077809.666667)
		( 10 , 92910830.555556 ) += (0, 11086284.444444) -= (0, 9218599.555556)
		( 15 , 253285739.111111 ) += (0, 16120940.888889) -= (0, 9157152.111111)
		( 20 , 494871649.000000 ) += (0, 39629962.000000) -= (0, 32443099.000000)
		( 25 , 869112984.888889 ) += (0, 37816876.111111) -= (0, 53043565.888889)
		( 30 , 1348469370.666667 ) += (0, 86707924.333333) -= (0, 94798342.666667)
		( 35 , 1988308667.000000 ) += (0, 85599666.000000) -= (0, 78343418.000000)
		( 40 , 2629199650.555555 ) += (0, 104513000.444445) -= (0, 53913724.555555)
		( 45 , 3517993545.666667 ) += (0, 139320140.333333) -= (0, 188208389.666667)
    };
    \addlegendentry{Our Dynamic Algorithm,
    $\epsilon=2$}
    
\addplot[
    color=blue,
    mark=star,
    error bars/.cd,
    y dir=both,
    y explicit,
    error bar style={
    color=blue}
    ]
    coordinates {
        ( 5 , 35405554.222222 ) += (0, 2376443.777778) -= (0, 1486295.222222)
		( 10 , 196508420.222222 ) += (0, 7381699.777778) -= (0, 10420606.222222)
		( 15 , 515682379.333333 ) += (0, 37699153.666667) -= (0, 33594104.333333)
		( 20 , 1033854317.555556 ) += (0, 56954701.444444) -= (0, 38228958.555556)
		( 25 , 1702475286.000000 ) += (0, 54973038.000000) -= (0, 44360331.000000)
		( 30 , 2611829259.000000 ) += (0, 85458393.000000) -= (0, 44244028.000000)
		( 35 , 3742451481.777778 ) += (0, 170672852.222222) -= (0, 171814586.777778)
		( 40 , 5174711541.666667 ) += (0, 74436428.333333) -= (0, 97808016.666667)
		( 45 , 6807616127.777778 ) += (0, 198154404.222222) -= (0, 218206573.777778)
    };
    \addlegendentry{Our Dynamic Algorithm,
    $\epsilon=1$}

\addplot[
    color=red,
    mark=square,
    error bars/.cd,
    y dir=both,
    y explicit,
    error bar style={
    color=red}
    ]
    coordinates {
        ( 5 , 3114011915.444445 ) += (0, 17899720.555555) -= (0, 20116425.444445)
		( 10 , 3117948866.888889 ) += (0, 14555063.111111) -= (0, 21233219.888889)
		( 15 , 3126764897.666667 ) += (0, 23146108.333333) -= (0, 22656578.666667)
		( 20 , 3118338550.888889 ) += (0, 9580419.111111) -= (0, 13787916.888889)
		( 25 , 3124633826.111111 ) += (0, 13504537.888889) -= (0, 12850275.111111)
		( 30 , 3125302115.333333 ) += (0, 8645101.666667) -= (0, 9255341.333333)
		( 35 , 3124813521.888889 ) += (0, 5886252.111111) -= (0, 12436652.888889)
		( 40 , 3126550079.777778 ) += (0, 12202176.222222) -= (0, 15273603.777778)
		( 45 , 3125235661.222222 ) += (0, 12601548.777778) -= (0, 11039934.222222)
    };
    \addlegendentry{\textsc{Sample-Streaming}}

\addplot[
    color=yellow,
    mark=triangle,
    ]
    coordinates {
        (5, 393182) (10, 393183) (15, 393183) (20, 393183) (25, 393183) (30, 393183) (35, 393183) (40, 393183) (45, 393183)
    };
    \addlegendentry{\textsc{Random Algorithm}}
    
\end{axis}
\end{tikzpicture}
\end{tabular}

\small (c) web-Stanford's total oracle calls

\begin{tabular}{@{}c@{}}
\begin{tikzpicture}
\pgfplotsset{width=6cm,compat=1.9}
\begin{axis}[
    xlabel={$k$},
    ylabel={f},
    ymin=0, ymax=250000,
    xtick={10, 20, 30, 40, 50},
    legend pos=north west,
    ymajorgrids=true,
    grid style=dashed,
    legend style={nodes={scale=0.5, transform shape}},
]
\addplot[
    color=green,
    mark=triangle,
    error bars/.cd,
    y dir=both,
    y explicit,
    error bar style={
    color=green}
    ]
    coordinates {
        ( 5 , 24377.088889 ) += (0, 8212.411111) -= (0, 3746.788889)
		( 10 , 70443.922222 ) += (0, 4070.877778) -= (0, 6677.822222)
		( 15 , 78713.766667 ) += (0, 6524.833333) -= (0, 8176.566667)
		( 20 , 98380.500000 ) += (0, 7992.500000) -= (0, 11087.700000)
		( 25 , 106610.011111 ) += (0, 4841.988889) -= (0, 9088.211111)
		( 30 , 133892.888889 ) += (0, 6230.111111) -= (0, 5117.888889)
		( 35 , 148297.444444 ) += (0, 2814.555556) -= (0, 2955.444444)
		( 40 , 139530.777778 ) += (0, 10998.222222) -= (0, 9124.777778)
		( 45 , 145499.777778 ) += (0, 4852.222222) -= (0, 4439.777778)
    };
    \addlegendentry{Our Dynamic Algorithm, $\epsilon = 2$}
    
\addplot[
    color=blue,
    mark=star,
    error bars/.cd,
    y dir=both,
    y explicit,
    error bar style={
    color=blue}
    ]
    coordinates {
        ( 5 , 37587.422222 ) += (0, 3103.477778) -= (0, 3924.622222)
		( 10 , 67678.411111 ) += (0, 5002.088889) -= (0, 4951.511111)
		( 15 , 99352.288889 ) += (0, 2204.711111) -= (0, 906.088889)
		( 20 , 98558.366667 ) += (0, 4564.633333) -= (0, 613.366667)
		( 25 , 123995.666667 ) += (0, 6879.333333) -= (0, 5083.666667)
		( 30 , 131498.555556 ) += (0, 6824.444444) -= (0, 9953.555556)
		( 35 , 154508.555556 ) += (0, 1.444444) -= (0, 1.555556)
		( 40 , 157116.888889 ) += (0, 1.111111) -= (0, 0.888889)
		( 45 , 156808.222222 ) += (0, 6839.777778) -= (0, 4863.222222)
    };
    \addlegendentry{Our Dynamic Algorithm, $\epsilon = 1$}

\addplot[
    color=red,
    mark=square,
    error bars/.cd,
    y dir=both,
    y explicit,
    error bar style={
    color=red}
    ]
    coordinates {
        ( 5 , 31769.544444 ) += (0, 8921.555556) -= (0, 9198.744444)
		( 10 , 42163.533333 ) += (0, 17786.866667) -= (0, 16716.933333)
		( 15 , 49351.222222 ) += (0, 21029.377778) -= (0, 15013.822222)
		( 20 , 59337.433333 ) += (0, 20001.266667) -= (0, 14852.433333)
		( 25 , 53153.088889 ) += (0, 16197.111111) -= (0, 21085.388889)
		( 30 , 63739.488889 ) += (0, 20565.411111) -= (0, 22483.688889)
		( 35 , 61544.600000 ) += (0, 18584.100000) -= (0, 18290.500000)
		( 40 , 66791.822222 ) += (0, 16393.277778) -= (0, 29546.522222)
		( 45 , 73836.966667 ) += (0, 21812.133333) -= (0, 13626.366667)
    };
    \addlegendentry{\textsc{Sample-Streaming}}

\addplot[
    color=yellow,
    mark=triangle,
    ]
    coordinates {
        (5, 168.531) (10, 337.935) (15, 502.889) (20, 670.701) (25, 836.278) (30, 1002.38) (35, 1162.24) (40, 1326.02) (45, 1489.47) 
    };
    \addlegendentry{\textsc{Random Algorithm}}
    
\end{axis}
\end{tikzpicture}
\end{tabular}

\small (d) web-Stanford's average output 

\caption{We plots the total number of query calls and the average output of our dynamic algorithm, \textsc{Sample-Streaming}, and \textsc{Random Algorithm} on two data sets.} 
\label{fig:experiment}
\end{figure}


In addition, in plots (b) and (d), we show that in practice, 
our algorithm has a better approximation guarantee than what we proved theoretically. 
Interestingly, the approximation factor of \textsc{Sample-Streaming} is $3+2\sqrt{2} \approx 5.828$ 
which is better than the approximation factor that we proved for our dynamic algorithm, but 
the plots (b) and (d) show that in reality, our algorithm performs better.
Moreover, we run \textsc{Random Algorithm} that selects $k$ random elements from remaining elements after each update. 
As we see in Figure~\ref{fig:experiment} plots (b) and (d), \textsc{Random algorithm} performs poorly. 
Also, we run our algorithm for $\eps=2$. 
As is shown in plot (d), the output has not changed significantly by increasing $\eps$, and it remains more than the average output of \textsc{Sample-Streaming}.
On the other hand, the number of oracle calls decreased significantly (plot (c)).
Therefore, for bigger $k$, increasing the $\eps$ is a good option to get an appropriate output in a short time.

\newpage

\subsection{Additional experiments}
In this section, in addition to our previous experiments, we run another experiment.
In Figure \ref{fig:sup_experiment}, we use a slightly different type of input 
where we first insert all vertices in descending order of their degrees. 
Then, we delete them in a semi-randomized order.
For deletions, we first sort vertices in descending order of their degrees (ties are broken arbitrarily) and with half probability 
we swap a vertex with its left or right neighbors. 
This will be the order that we follow to delete vertices. 
We use this type of input to make our input more general, such that no sliding windows algorithm works on it. 
Also, the noise that we added to the deletion makes our input more real.

\begin{figure}[h]
\centering

\begin{tabular}{@{}c@{}}
\begin{tikzpicture}
\pgfplotsset{width=6cm,compat=1.9}
 \begin{axis}
[
     xlabel={$k$},
     ylabel={oracle calls},
     ymin=0, ymax=5300000000,
     xtick={10, 20, 30, 40, 50},
     legend pos=north west,
     ymajorgrids=true,
     grid style=dashed,
     legend style={nodes={scale=0.5, transform shape}}
]

\addplot[
    color=red,
    mark=square,
    ]
    coordinates {
        (5, 484371804) (10, 1462307127) (15, 1475462952) (20, 1469936061) (25, 1477983557) (30, 1479531073) (35, 1473047669) (40, 1464420175) (45, 1478238401) 
    };
    \addlegendentry{Sample-Streaming}

\addplot[
    color=blue,
    mark=star,
    ]
    coordinates {
        (5, 24598208) (10, 146169121) (15, 339349878) (20, 715005346) (25, 1256514103) (30, 1907653364) (35, 2811814891) (40, 3644061084) (45, 4935024520) 
    };
    \addlegendentry{\textsc{Our Dynamic Algorithm}}

\addplot[
    color=yellow,
    mark=triangle,
    ]
    coordinates {
        (5, 393182) (10, 393183) (15, 393183) (20, 393183) (25, 393183) (30, 393183) (35, 393183) (40, 393183) (45, 393183)
    };
    \addlegendentry{\textsc{Random Algorithm}}
    
\end{axis}
\end{tikzpicture}

\end{tabular}

\small (a) random loc-Gowalla's total oracle calls

\begin{tabular}{@{}c@{}}
\begin{tikzpicture}
\pgfplotsset{width=6cm,compat=1.9}
\begin{axis}[
    xlabel={$k$},
    ylabel={f},
    ymin=0, ymax=100000,
    xtick={10, 20, 30, 40, 50},
    legend pos=north west,
    ymajorgrids=true,
    grid style=dashed,
    legend style={nodes={scale=0.5, transform shape}},
]

\addplot[
    color=red,
    mark=square,
    ]
    coordinates {
        (5, 12263.4) (10, 13334.3) (15, 24169.7) (20, 19319.8) (25, 31192.5) (30, 32695.8) (35, 53792.9) (40, 51480.6) (45, 36781.7) 
    };
    \addlegendentry{Sample-Streaming}

\addplot[
    color=blue,
    mark=star,
    ]
    coordinates {
        (5, 38052.3) (10, 47144.6) (15, 54282.6) (20, 54427.3) (25, 66086.6) (30, 70893.3) (35, 79876.7) (40, 84664.9) (45, 75983.8)  
    };
    \addlegendentry{\textsc{Our Dynamic Algorithm}}

\addplot[
    color=yellow,
    mark=triangle,
    ]
    coordinates {
        (5, 168.531) (10, 337.935) (15, 502.889) (20, 670.701) (25, 836.278) (30, 1002.38) (35, 1162.24) (40, 1326.02) (45, 1489.47) 
    };
    \addlegendentry{\textsc{Random Algorithm}}
    
\end{axis}
\end{tikzpicture}
\end{tabular}

\small (b) random loc-Gowalla's average output

\caption{We plot the total number of query calls and the average output of our dynamic algorithm, \textsc{Sample-Streaming}, and \textsc{Random Algorithm} on loc-Gowalla with a noisy order for deletion.}
\label{fig:sup_experiment}
\end{figure}

\newpage

\subsection{Larger version of video summarization plots}
Here in Figure \ref{fig:video_experiment_benchmark_large} we provided a larger version of Figure \ref{fig:video_experiment_benchmark} to make it easier to read.

\begin{figure*}[h]
\centering
\begin{tabular}{@{}c@{}}
\begin{tikzpicture}
\pgfplotsset{width=8cm,compat=1.9}
\begin{axis}[
    xlabel={$k$},
    ylabel={oracle calls},
    ymin=0, ymax=1000,
    xtick={2, 4, 6, 8, 10},
    legend pos=north west,
    ymajorgrids=true,
    grid style=dashed,
    legend style={nodes={scale=0.5, transform shape}},
]

\addplot[
    color=red,
    mark=square,
    error bars/.cd,
    y dir=both,
    y explicit,
    error bar style={
    color=red}
    ]
    coordinates {
        (1, 331.20000) += (0, 106.80000) -= (0, 123.20000) (2, 373.60000) += (0, 108.40000) -= (0, 143.60000) (3, 346.80000) += (0, 142.20000) -= (0, 64.80000) (4, 371.00000) += (0, 77.00000) -= (0, 124.00000) (5, 329.80000) += (0, 94.20000) -= (0, 89.80000) (6, 309.50000) += (0, 50.50000) -= (0, 50.50000) (7, 368.80000) += (0, 80.20000) -= (0, 127.80000) (8, 352.70000) += (0, 54.30000) -= (0, 74.70000) (9, 367.60000) += (0, 112.40000) -= (0, 104.60000) (10, 385.60000) += (0, 169.40000) -= (0, 94.60000) 
    };
    \addlegendentry{\textsc{Sample-Streaming}}

\addplot[
    color=blue,
    mark=star,
    error bars/.cd,
    y dir=both,
    y explicit,
    error bar style={
    color=blue}
    ]
    coordinates {
        (1, 461.40000) += (0, 9.60000) -= (0, 15.40000) (2, 662.30000) += (0, 90.70000) -= (0, 99.30000) (3, 529.50000) += (0, 91.50000) -= (0, 64.50000) (4, 447.80000) += (0, 31.20000) -= (0, 54.80000) (5, 323.90000) += (0, 20.10000) -= (0, 24.90000) (6, 235.40000) += (0, 16.60000) -= (0, 12.40000) (7, 179.20000) += (0, 9.80000) -= (0, 16.20000) (8, 178.50000) += (0, 18.50000) -= (0, 10.50000) (9, 189.80000) += (0, 16.20000) -= (0, 16.80000) (10, 238.30000) += (0, 14.70000) -= (0, 18.30000)
    };
    \addlegendentry{\textsc{Our Dynamic Algorithm}}
\end{axis}
\end{tikzpicture}

\begin{tikzpicture}
\pgfplotsset{width=8cm,compat=1.9}
\begin{axis}[
    xlabel={$k$},
    ylabel={f},
    ymin=0, ymax=8,
    xtick={2, 4, 6, 8, 10},
    legend pos=north west,
    ymajorgrids=true,
    grid style=dashed,
    legend style={nodes={scale=0.5, transform shape}},
]

\addplot[
    color=red,
    mark=square,
    error bars/.cd,
    y dir=both,
    y explicit,
    error bar style={
    color=red}
    ]
    coordinates {
        (1, 1.62818) += (0, 0.44989) -= (0, 0.57743) (2, 2.14678) += (0, 0.65564) -= (0, 0.41891) (3, 2.48939) += (0, 0.33780) -= (0, 0.40127) (4, 2.39747) += (0, 0.34600) -= (0, 0.64868) (5, 2.28432) += (0, 0.52396) -= (0, 0.47845) (6, 2.30636) += (0, 0.45690) -= (0, 0.35722) (7, 2.43273) += (0, 0.48661) -= (0, 0.59661) (8, 2.39775) += (0, 0.46000) -= (0, 0.46214) (9, 2.28745) += (0, 0.42656) -= (0, 0.32351) (10, 2.39145) += (0, 0.50267) -= (0, 0.74555) 
    };
    \addlegendentry{\textsc{Sample-Streaming}}

\addplot[
    color=blue,
    mark=star,
    error bars/.cd,
    y dir=both,
    y explicit,
    error bar style={
    color=blue}
    ]
    coordinates {
        (1, 2.46816) += (0, 0.07933) -= (0, 0.42347) (2, 2.15307) += (0, 0.32738) -= (0, 0.17542) (3, 2.29888) += (0, 0.09777) -= (0, 0.19832) (4, 2.19944) += (0, 0.43743) -= (0, 0.58491) (5, 2.39385) += (0, 0.40503) -= (0, 0.38268) (6, 2.36480) += (0, 0.17151) -= (0, 0.35363) (7, 2.47486) += (0, 0.06145) -= (0, 0.17877) (8, 2.49553) += (0, 0.03520) -= (0, 0.02626) (9, 2.46536) += (0, 0.07095) -= (0, 0.25866) (10, 2.26592) += (0, 0.20335) -= (0, 0.26033)
    };
    \addlegendentry{\textsc{Our Dynamic Algorithm}}
\end{axis}
\end{tikzpicture}
\end{tabular}

\small (a) Video 106 total oracle calls and average output

\begin{tabular}{@{}c@{}}
\begin{tikzpicture}
\pgfplotsset{width=8cm,compat=1.9}
\begin{axis}[
    xlabel={$k$},
    ylabel={oracle calls},
    ymin=0, ymax=1300,
     xtick={2, 4, 6, 8, 10},
    legend pos=north west,
    ymajorgrids=true,
    grid style=dashed,
    legend style={nodes={scale=0.5, transform shape}},
]

\addplot[
    color=red,
    mark=square,
    error bars/.cd,
    y dir=both,
    y explicit,
    error bar style={
    color=red}
    ]
    coordinates {
        (1, 322.70000) += (0, 146.30000) -= (0, 146.70000) (2, 484.00000) += (0, 183.00000) -= (0, 309.00000) (3, 553.50000) += (0, 217.50000) -= (0, 183.50000) (4, 687.60000) += (0, 171.40000) -= (0, 182.60000) (5, 603.70000) += (0, 118.30000) -= (0, 142.70000) (6, 625.30000) += (0, 94.70000) -= (0, 112.30000) (7, 681.60000) += (0, 220.40000) -= (0, 104.60000) (8, 615.80000) += (0, 133.20000) -= (0, 196.80000) (9, 717.10000) += (0, 173.90000) -= (0, 180.10000) (10, 694.10000) += (0, 170.90000) -= (0, 180.10000) 
    };
    \addlegendentry{\textsc{Sample-Streaming}}

\addplot[
    color=blue,
    mark=star,
    error bars/.cd,
    y dir=both,
    y explicit,
    error bar style={
    color=blue}
    ]
    coordinates {
        (1, 445.80000) += (0, 14.20000) -= (0, 15.80000) (2, 654.50000) += (0, 86.50000) -= (0, 56.50000) (3, 580.50000) += (0, 41.50000) -= (0, 51.50000) (4, 574.30000) += (0, 60.70000) -= (0, 51.30000) (5, 675.40000) += (0, 156.60000) -= (0, 88.40000) (6, 644.90000) += (0, 89.10000) -= (0, 79.90000) (7, 581.80000) += (0, 52.20000) -= (0, 39.80000) (8, 520.90000) += (0, 21.10000) -= (0, 47.90000) (9, 583.10000) += (0, 53.90000) -= (0, 55.10000) (10, 706.50000) += (0, 173.50000) -= (0, 107.50000) 
    };
    \addlegendentry{\textsc{Our Dynamic Algorithm}}
\end{axis}
\end{tikzpicture}

\begin{tikzpicture}
\pgfplotsset{width=8cm,compat=1.9}
\begin{axis}[
    xlabel={$k$},
    ylabel={f},
    ymin=0, ymax=10,
    xtick={2, 4, 6, 8, 10},
    legend pos=north west,
    ymajorgrids=true,
    grid style=dashed,
    legend style={nodes={scale=0.5, transform shape}},
]

\addplot[
    color=red,
    mark=square,
    error bars/.cd,
    y dir=both,
    y explicit,
    error bar style={
    color=red}
    ]
    coordinates {
        (1, 2.20679) += (0, 0.38838) -= (0, 0.38206) (2, 3.30840) += (0, 0.44028) -= (0, 0.39134) (3, 3.51886) += (0, 0.73748) -= (0, 0.68497) (4, 3.22474) += (0, 0.36300) -= (0, 0.61065) (5, 2.83012) += (0, 0.86126) -= (0, 1.18679) (6, 2.91402) += (0, 0.50340) -= (0, 1.78629) (7, 2.96153) += (0, 0.91239) -= (0, 1.41191) (8, 3.19881) += (0, 0.46492) -= (0, 0.93366) (9, 3.18183) += (0, 0.96342) -= (0, 1.56121) (10, 3.23328) += (0, 0.66669) -= (0, 0.46605)
    };
    \addlegendentry{\textsc{Sample-Streaming}}

\addplot[
    color=blue,
    mark=star,
    error bars/.cd,
    y dir=both,
    y explicit,
    error bar style={
    color=blue}
    ]
    coordinates {
        (1, 3.34082) += (0, 0.00204) -= (0, 0.00613) (2, 3.62122) += (0, 0.31347) -= (0, 0.27428) (3, 4.40612) += (0, 0.52449) -= (0, 0.53673) (4, 3.27633) += (0, 0.29102) -= (0, 0.32123) (5, 3.82531) += (0, 0.28081) -= (0, 0.55184) (6, 4.18082) += (0, 0.52530) -= (0, 0.44204) (7, 4.47143) += (0, 0.29184) -= (0, 0.36531) (8, 4.73959) += (0, 0.41143) -= (0, 0.69061) (9, 4.56939) += (0, 0.92449) -= (0, 0.63878) (10, 4.52245) += (0, 0.91020) -= (0, 0.66531)
    };
    \addlegendentry{\textsc{Our Dynamic Algorithm}}
\end{axis}
\end{tikzpicture}
\end{tabular}

\small (b) Video 36 total oracle calls and average output

\caption{We plot the total number of query calls and the average output of our dynamic algorithm and \textsc{Sample-Streaming} 
on video 106 from YouTube and video 36 from OVP.}
\label{fig:video_experiment_benchmark_large}
\end{figure*}

\end{document}